\newtheorem{theorem}{Theorem}[section]
\theoremstyle{remark}
\newtheorem*{remark}{Remark}
\theoremstyle{definition}
\newtheorem{definition}{Definition}[section]
\begin{document}
\title{Computable Lower Bounds for Capacities of Input-Driven Finite-State Channels} 


\author{%
  \IEEEauthorblockN{V.~Arvind Rameshwar}
  \IEEEauthorblockA{Indian Institute of Science, Bengaluru\\
                    Email: \texttt{vrameshwar@iisc.ac.in}}
  \and
  \IEEEauthorblockN{Navin Kashyap}
  \IEEEauthorblockA{Indian Institute of Science, Bengaluru\\
				  	Email: \texttt{nkashyap@iisc.ac.in}
}
}

\maketitle
\begin{abstract}
This paper studies the capacities of input-driven finite-state channels, i.e., channels whose current state is a time-invariant deterministic function of the previous state and the current input. We lower bound the capacity of such a channel using a dynamic programming formulation of a bound on the maximum reverse directed information rate. We show that the dynamic programming-based bounds can be simplified by solving the corresponding Bellman equation explicitly. In particular, we provide analytical lower bounds on the capacities of $(d,k)$-runlength-limited input-constrained binary symmetric and binary erasure channels. Furthermore, we provide a single-letter lower bound based on a class of input distributions with memory.
\end{abstract}

\section{Introduction}

Discrete finite-state channels (DFSCs or FSCs) are mathematical models for channels with finite memory that are applied to magnetic recording \cite{Magnetic}, and wireless communications \cite{Wireless}. The channel memory is encapsulated in the channel state, which takes values from a finite set. Although Shannon's single-letter expression exists for the capacity of discrete memoryless channels (DMCs) \cite{Shannon}, namely, $C_{DMC} = \max_{p(x)} I(X;Y)$, the capacity of FSCs, with the exception of special cases, is characterized only as a multi-letter expression.

In this paper, we derive lower bounds on the capacity of input-driven FSCs (Fig. \ref{fig1}), where the channel state evolves as a time-invariant function of the state at the previous time instant, and the current input. This class of channels includes the collection of DMCs with input constraints, which will be treated in some detail in our work. Existing work on lower bounding the capacity of FSCs includes the simulation-based approach in \cite{Arnold}, the generalized Blahut-Arimoto algorithm developed in \cite{GBAA}, and the stochastic approximation algorithm proposed in \cite{Han3}, all of which are numerical methods. Analytical lower bounds were derived by Zehavi and Wolf \cite{ZW88} for binary symmetric channels with a $(d,k)$-runlength-limited (RLL) constraint --- see Definition~\ref{def:RLL} --- at the input. Later works gave capacity lower bounds for input-constrained binary symmetric and binary erasure channels in the asymptotic (very low or very high noise) regimes \cite{Han1}, \cite{Han2}, \cite{Ordentlich}. Our work applies to a larger class of channels, and provides bounds for all values of the channel parameters. 

The key idea is the lower bounding of the $N$-letter mutual information between the channel inputs and the outputs, $I(X^N;Y^N)$, by the reverse directed information \cite{Kramer}, $I(Y^N\rightarrow X^N)$. We show that the derived lower bound on the capacity can be formulated as an infinite-horizon average-reward dynamic programming (DP) problem, and is hence computable. 

Further, we show that the lower bound derived can, in turn, be bounded below by a single-letter expression obtained using input distributions on a directed ``$\mathscr{V}$-graph''. Our treatment here is entirely analogous to the single-letter lower bounding technique introduced in \cite{Single}, which uses input distributions on a ``Q-graph'' that is obtained by a recursive quantization of channel outputs on a directed graph.

We then apply the DP-based lower bound to the class of input-constrained binary symmetric channels (BSCs) and binary erasure channels (BECs). DP problems are typically handled by solving the corresponding Bellman equations. We consider the $(d,k)$-RLL input-constrained (RIC) BSC and BEC, and explicitly solve the Bellman equations for the DP-based lower bounds for each of these channels.  Interestingly, our techniques recover the lower bounds given in \cite{ZW88}, for the $(d,k)$-RIC BSC, for $k<\infty$. For the $(1,\infty)$-RIC BSC and BEC, the analytical lower bounds thus found compare favourably with asymptotic lower bounds given in \cite{Han1}, \cite{Han2}, and extend to all values of the channel parameters.

\begin{figure}[!t]
  \centering
  \includegraphics[width=0.48\textwidth]{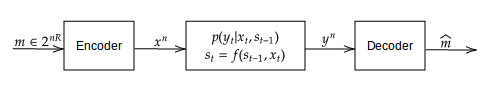}
   \caption{System model of an input-driven FSC.}
   \label{fig1}
\end{figure}

The paper is organized as follows. Section~\ref{sec:notation} contains the necessary information-theoretic preliminaries, Section~\ref{sec:results} states the main results, and Section~\ref{sec:DP} explains the DP formulation, which is used in Section~\ref{sec:examples} to derive explicit lower bounds on the capacity of the $(d,k)$-RIC BEC and BSC. The proofs of our main results are presented in Section~\ref{sec:proofs}. Some concluding remarks are made in Section~\ref{sec:conc}.

\section{Notation and Preliminaries}\label{sec:notation}

In this section, we introduce notation, the channel model, the lower bound on mutual information rate, and the definitions of $\mathscr{V}$-graphs and $(\mathscr{S},\mathscr{V})$-graphs.

\subsection{Notation}

In what follows, random variables will be denoted by capital letters, and their realizations by lower-case letters, e.g., $X$ and $x$, respectively. Calligraphic letters, e.g., $\mathscr{X}$, denote sets. The notations $X^{N}$ and $x^N$ denote the random vector $(X_1,\ldots,X_N)$ and the realization $(x_1,\ldots,x_N)$, respectively. Further, $P(x), P(y)$ and $P(y|x)$ are used to denote the probabilities $P_X(x), P_Y(y)$ and $P_{Y|X}(y|x)$, respectively. As is usual, the notations $H(X)$ and $I(X;Y)$ stand for the entropy of the random variable $X$, and the mutual information between the random variables $X$ and $Y$, respectively, and $h_b(p)$ and $H_{ter}(p,q)$ are the binary and ternary entropy functions, respectively. Finally, for a real number $\alpha \in [0,1]$, we define $\bar{\alpha}=1-\alpha$.

\subsection{Channel Model}\label{SS2}

Consider the family of channels with input $x_t\in \mathscr{X}$, output $y_t\in \mathscr{Y}$, and state $s_t\in \mathscr{S}$ at time $t$, with $\mathscr{X},\mathscr{Y},\mathscr{S}$ having finite cardinalities. At each time $t$, an FSC obeys $P(s_t,y_t|x^t,s^{t-1},y^{t-1})=P(s_t,y_t|x_t,s_{t-1})$. An \emph{input-driven} FSC has the additional property that there exists a time-invariant function, $f: \mathscr{S}\times \mathscr{X}\rightarrow \mathscr{S}$, such that $s_t=f(s_{t-1},x_t)$. 

%
%

We shall assume throughout that the initial state $s_0$ of the FSC is known to the encoder and decoder. While this assumption can be removed through a suitable notion of channel indecomposability \cite[(4.6.26)]{Gallager}, we retain the assumption as it is realistic in the context of input-constrained DMCs, which is the main application of interest to us. The following theorem gives an expression for the capacity of such FSCs.

\begin{theorem}[\cite{Gallager}, Ch.~$4.6$]
	The capacity of an FSC with known initial state $s_0$ is given by
	\begin{equation*}
	C = \lim_{N\rightarrow \infty} \max_{P(x^N | s_0)} \frac{1}{N}I(X^N;Y^N|s_0).
	\end{equation*}
	\label{thm:FSC_cap}
\end{theorem}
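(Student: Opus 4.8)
\emph{Proof plan.} This is a channel coding theorem, so the plan is to show that $C^\star := \lim_{N\to\infty} C_N(s_0)$, with $C_N(s) := \max_{P(x^N\mid s)}\tfrac1N I(X^N;Y^N\mid s)$ (the maximum being over all input distributions on $\mathscr{X}^N$), is both an achievable rate and a converse bound for the operational capacity, after first verifying that the limit exists. For existence I would put $\underline C_N := \min_s C_N(s)$, $\overline C_N := \max_s C_N(s)$ and split a length-$(m+n)$ block at time $m$: for an input distribution of the form $P(x^m\mid s_0)\,P(x_{m+1}^{m+n}\mid x^m,s_0)$, the chain rule together with the FSC Markov property $P(s_t,y_t\mid x^t,s^{t-1},y^{t-1})=P(s_t,y_t\mid x_t,s_{t-1})$ --- which makes $Y_{m+1}^{m+n}$ depend on $(x^m,s_0)$ only through the intermediate state $s_m$ --- gives $I(X^{m+n};Y^{m+n}\mid s_0)\ge I(X^m;Y^m\mid s_0)+\mathbb{E}\big[n\,C_n(S_m)\big]$, so that $N\underline C_N$ is super-additive; a dual chain-rule estimate shows $N\overline C_N$ is sub-additive. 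Fekete's lemma then produces $\lim_N\underline C_N$ and $\lim_N\overline C_N$, and since $\underline C_N\le C_N(s_0)\le\overline C_N$, invoking indecomposability (which, as the remark after the statement notes, suffices to treat the general case) to equate the two limits pins down $\lim_N C_N(s_0)=C^\star$.

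The converse should be routine: for any sequence of length-$N$ codes with $2^{NR}$ equiprobable messages, vanishing error probability, and known $s_0$, Fano's inequality yields $NR\le I(X^N;Y^N\mid s_0)+N\epsilon_N$ with $\epsilon_N\to0$, and $I(X^N;Y^N\mid s_0)\le N\,C_N(s_0)$ by definition, so letting $N\to\infty$ gives $R\le C^\star$.

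The hard part will be achievability. I would fix a large $N$ and an input distribution with $\tfrac1N I(X^N;Y^N\mid s_0)\ge C_N(s_0)-\delta$, treat length-$N$ input blocks as super-symbols, and code over $L$ concatenated blocks with joint decoding. The obstacle --- the crux of the argument --- is that this block channel is \emph{not} memoryless across super-symbols: the initial state of block $j+1$ is the terminal state of block $j$, which is random (in the input-driven case of interest, a deterministic function of the codeword sent so far). I would handle this by a standard device: either insert between consecutive blocks a fixed driving word of length $o(N)$ that forces the state to a canonical value $s^\star$ --- available under indecomposability, at negligible rate cost, after which every block is governed by $P(x^N\mid s^\star)$ --- or keep the concatenation and control the state trajectory directly, noting that conditioned on a codeword it is (in the input-driven case) determined, so that a joint-typicality / maximum-likelihood analysis of the $L$-fold concatenation goes through via sub-additivity of the associated information densities. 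Either way one obtains reliable communication at rates approaching $C^\star$ as $N\to\infty$ and $\delta\to0$, which with the converse gives $C=C^\star$. The residual difficulty is purely the block-to-block state bookkeeping; the Fekete and Fano steps are standard.
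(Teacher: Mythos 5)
First, a point of reference: the paper does not prove this statement at all --- Theorem~\ref{thm:FSC_cap} is imported verbatim from Gallager, Ch.~4.6, so there is no in-paper proof to compare against. Your overall architecture (existence of the limit via Fekete, converse via Fano, achievability via coding over concatenated super-blocks with the inter-block state handled by indecomposability/resetting) is precisely the standard Gallager-style route, so at the level of strategy you are on the right track.

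There is, however, a genuine gap in your existence argument as written. The inequality $I(X^{m+n};Y^{m+n}\mid s_0)\ge I(X^m;Y^m\mid s_0)+\mathbb{E}\bigl[n\,C_n(S_m)\bigr]$ fails for a general FSC, for two reasons. First, achieving $\mathbb{E}[n\,C_n(S_m)]$ requires the second-block input law to be chosen as a function of $S_m$, but the encoder only knows $x^m$ (and $s_0$); for a general FSC the state $S_m$ is random given $(x^m,s_0)$, so no admissible $P(x_{m+1}^{m+n}\mid x^m,s_0)$ can adapt to it. Second, even for a fixed second-block input law, the term you actually control is the mutual information of the \emph{mixture} channel $\sum_{s_m}P(s_m\mid x^m,s_0)P(y_{m+1}^{m+n}\mid x_{m+1}^{m+n},s_m)$, and this is not lower-bounded by the average --- or even the minimum --- of the per-state mutual informations: an equal mixture of the noiseless binary channel and the bit-flipping channel has zero mutual information while each component carries one bit. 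Gallager's actual argument avoids both problems by working with $\max_P\min_s$ quantities and using $I(A;B)\ge I(A;B\mid S)-H(S)$, which costs an additive $\log|\mathscr{S}|$ slack that Fekete's lemma tolerates; your clean super-additivity only holds for the input-driven case (where $S_m=f(s_0,x^m)$ is determined by the conditioning variables), which happens to be the case the paper uses but is not the generality in which the theorem is stated. Separately, your achievability step --- which is the actual content of the coding theorem --- is only a plan: you correctly identify the block-to-block state coupling as the crux, but neither of the two devices you name is carried out, so the proposal should be read as a correct outline with one false intermediate inequality rather than a proof.
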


From here on, we will drop the explicit conditioning on $s_0$ in our notation, including it only when there is need.

We also introduce below the definition of a connected FSC, which we shall use in Theorem~\ref{Thm5}.

\begin{definition}
\label{def:connected}
An FSC is \emph{connected} if, for each $s \in \mathscr{S}$, there is an input distribution $\{P(x_n|s_{n-1})\}_{n \ge 1}$ and an integer $N$ such that $\sum_{n=1}^N P_{S_n|S_0} (s|s') > 0$, for all $s' \in \mathscr{S}$.
\end{definition}

\subsection{Directed Information}

We recall the definition of directed information, introduced by Massey \cite{Massey}:

\begin{definition}
	The (forward) \emph{directed information} is given by
	\begin{align*}
	I(X^N\rightarrow Y^N)&:=\sum_{t=1}^{N} I(X^t;Y_t|Y^{t-1})\\
	&=\mathbb{E}\left[\log_2\left(\frac{P(Y^N||X^N)}{P(Y^N)}\right)\right],
	\end{align*}
	where $P(y^N||x^N):=\prod_{t=1}^{N}P(y_t|x^{t},y^{t-1})$ is the \emph{causal conditioning distribution}.
\end{definition}
Analogously, we define the \emph{reverse} directed information by
\begin{equation*}
	I(Y^N\rightarrow X^N):=\sum_{t=1}^{N} I(Y^t;X_t|X^{t-1}).
	\end{equation*}
In addition, we make use of the following definition:
\begin{align*}
I(X^{N-1}\rightarrow Y^N)&:=\sum_{t=1}^{N} I(X^{t-1};Y_t|Y^{t-1})\\
&=\mathbb{E}\left[\log_2\left(\frac{P(Y^N||X^{N-1})}{P(Y^N)}\right)\right],
\end{align*}
where $P(y^N||x^{N-1}):=\prod_{t=1}^{N}P(y_t|x^{t-1},y^{t-1})$.


The following conservation law for information is well-known \cite[Prop.~2]{Massey_isit05}:
\begin{equation}
	I(X^N;Y^N)=I(Y^N\rightarrow X^N)+I(X^{N-1}\rightarrow Y^N).
	\label{eq:conserv}
\end{equation}
In particular, mutual information is bounded below by reverse directed information, i.e.,
\begin{equation}
\label{LB}
I(X^N;Y^N)\geq I(Y^N\rightarrow X^N).
\end{equation}

\subsection{The $\mathscr{V}$-graph and $(\mathscr{S},\mathscr{V})$-graph}
Similar to the $Q$-graph and the $(\mathscr{S},Q)$-graph of \cite{Single}, we introduce the following definitions:

\begin{definition}
	A $\mathscr{V}$-graph is a finite irreducible labelled directed graph on a vertex set $\mathscr{V}$, with the property that each $v\in \mathscr{V}$ has at most $|\mathscr{X}|$ outgoing edges, each labelled by a unique $x\in \mathscr{X}$.
\end{definition}
	
	Thus, there exists a function $\Phi: \mathscr{V} \times \mathscr{X} \rightarrow \mathscr{V}$, such that $\Phi(v,x)=v'$ iff there is an edge $v \stackrel{x}{\longrightarrow} v'$ in the $\mathscr{V}$-graph.
We arbitrarily label one vertex of the $\mathscr{V}$-graph as $v_0$. For any positive integer $n$, there is a one-to-one correspondence between sequences in $(x_1,x_2,\ldots,x_n) \in \mathscr{X}^n$ and directed paths in the $\mathscr{V}$-graph  starting from $v_0$: $v_0 \stackrel{x_1}{\longrightarrow} v_1 \stackrel{x_2}{\longrightarrow}
\cdots \stackrel{x_n}{\longrightarrow} v_n$. Fig. \ref{figV} depicts an example of a $\mathscr{V}$-graph.

\begin{figure}[htbp]
  \centering
  \includegraphics[width=0.45\textwidth]{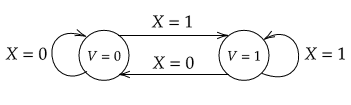}
   \caption{A $\mathscr{V}$-graph where each node represents the last channel input, where $\mathscr{X}=\{0,1\}$.}
   \label{figV}
\end{figure}

\begin{definition}
	Given an input-driven FSC specified by $\{P(y|x,s)\}$ and $s_t=f(s_{t-1},x_t)$, and a $\mathscr{V}$-graph with vertex set $\mathscr{V}$, the $(\mathscr{S},\mathscr{V})$-graph is defined to be a directed graph on the vertex set $\mathscr{S}\times \mathscr{V}$, with edges $(s,v) \xrightarrow{(x,y)} (s^{\prime},v^{\prime})$ if and only if $P(y|x,s)>0$, $s^{\prime}=f(s,x)$, and $v^{\prime}=\Phi(v,x)$.
\end{definition}

Now, given an input distribution $\{Q(x|s,v)\}$ defined for each $(s,v)$ in the $(\mathscr{S},\mathscr{V})$-graph, we have a Markov chain on $\mathscr{S}\times \mathscr{V}$, where the transition probability associated with any edge $(x,y)$ emanating from $(s,v)\in \mathscr{S}\times \mathscr{V}$ is $P(y|x,s)Q(x|s,v)$. Let $\mathscr{G}(\{Q(x|s,v)\})$ be the subgraph remaining after discarding edges of zero probability. We then define
\begin{align*}
\mathscr{Q} \triangleq \bigl\lbrace \{Q(x | s,v)\}: \mathscr{G}& (\{Q(x|s,v)\})\text{ has a single} \\ 
& \text{closed communicating class}\bigr\rbrace.
\end{align*}
Given an irreducible $\mathscr{V}$-graph, an input distribution $\{Q(x|s,v)\} \in \mathscr{Q}$ is said to be \emph{aperiodic}, if the corresponding graph, $\mathscr{G}(\{Q(x|s,v)\})$, is aperiodic. For such distributions, the Markov chain on $\mathscr{S}\times \mathscr{V}$ has a unique stationary distribution $\pi(s,v)$.

\section{Main Results}\label{sec:results}
We shall now restrict attention to input-driven FSCs, defined in Section \ref{SS2}. We assume that the initial channel state, $s_0$, is chosen deterministically, and is known to both the encoder and the decoder. We present a lower bound on the capacity of indecomposable input-driven FSCs.

\begin{theorem}\label{Thm3}
    The capacity of an input-driven FSC with known initial state is bounded below as:
    \begin{align*}
        C&\geq \lim_{N\rightarrow \infty} \max_{\{P(x_t|s_{t-1})\}_{t= 1}^{N}}\frac{1}{N}\sum_{t=1}^{N}I(X_t,S_{t-1};Y_t \mid X^{t-1})\\
        &=\sup_{\{P(x_t|s_{t-1})\}_{t \geq 1}} \liminf_{N\rightarrow \infty} \frac{1}{N}\sum_{t=1}^{N}I(X_t,S_{t-1};Y_t \mid X^{t-1}),
    \end{align*}
    where the conditional distribution $P_{X_t,S_{t-1},Y_t|X^{t-1}}(x_t,s_{t-1},y_{t}|x^{t-1})=\beta_{t-1}(s_{t-1})\times P(x_t|s_{t-1})P(y_t|x_t,s_{t-1})$, with $\beta_{t-1}(s_{t-1})=P(s_{t-1}|x^{t-1})$.
\end{theorem}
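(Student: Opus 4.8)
The plan is to start from the exact capacity expression in Theorem~\ref{thm:FSC_cap} and apply the lower bound \eqref{LB} on mutual information by reverse directed information. Concretely, for any input distribution $P(x^N \mid s_0)$ we have, by the definition of the reverse directed information and by \eqref{LB},
\begin{equation*}
\frac{1}{N} I(X^N;Y^N) \;\geq\; \frac{1}{N} I(Y^N \rightarrow X^N) \;=\; \frac{1}{N}\sum_{t=1}^{N} I(Y^t;X_t \mid X^{t-1}).
\end{equation*}
The first step is therefore to show that each term $I(Y^t;X_t \mid X^{t-1})$ can be replaced by the ``one-step'' quantity $I(X_t,S_{t-1};Y_t \mid X^{t-1})$ appearing in the statement. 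I would prove $I(Y^t;X_t\mid X^{t-1}) \ge I(X_t,S_{t-1};Y_t\mid X^{t-1})$ — or, more likely, that one can restrict without loss of generality to input distributions of the Markov form $P(x_t\mid x^{t-1},s_0)=P(x_t\mid s_{t-1})$, and that under such distributions the two quantities are in fact equal. The key structural fact is that for an input-driven FSC the state $s_{t-1}$ is a deterministic function of $(s_0,x^{t-1})$, so conditioning on $X^{t-1}$ (with $s_0$ fixed) determines $S_{t-1}$; hence $\beta_{t-1}(s_{t-1}) = P(s_{t-1}\mid x^{t-1})$ is a well-defined $\{0,1\}$-valued quantity for a fixed input sequence, and averaging over $X^{t-1}$ gives the distribution asserted in the theorem. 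Combined with $P(s_t,y_t\mid x^t,s^{t-1},y^{t-1})=P(s_t,y_t\mid x_t,s_{t-1})$ and $s_t=f(s_{t-1},x_t)$, this yields the Markov structure $P(y_t\mid x_t, s_{t-1}, x^{t-1}, y^{t-1}) = P(y_t\mid x_t,s_{t-1})$, which I would use to rewrite the conditional mutual information and to justify discarding the dependence on the full past $Y^{t-1}$ beyond what is captured by $S_{t-1}$.

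The second step is to pass from the $\liminf$/$\limsup$ of the normalized finite-$N$ maxima to the supremum over stationary (time-invariant) policies $\{P(x_t\mid s_{t-1})\}_{t\ge1}$, establishing the claimed equality between the $\lim\max$ expression and the $\sup\liminf$ expression. One direction ($\ge$) is immediate since any single time-invariant policy is a feasible choice in the finite-horizon maximization; the reverse direction ($\le$) requires a subadditivity/Fekete-type argument, showing that the sequence $a_N := \max_{\{P(x_t\mid s_{t-1})\}}\sum_{t=1}^N I(X_t,S_{t-1};Y_t\mid X^{t-1})$ is superadditive (or nearly so, up to a bounded correction coming from re-initializing the state), so that $\lim_N a_N/N$ exists and equals $\sup_N a_N/N$, and then exhibiting, for each $N$, a time-invariant policy whose long-run average reward is within $o(1)$ of $a_N/N$ — this is exactly where the average-reward dynamic-programming viewpoint (and, implicitly, connectedness as in Definition~\ref{def:connected}) enters, since it guarantees that an optimal stationary policy achieves the optimal average reward independently of the (fixed) initial state. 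I expect this passage — turning the finite-horizon optimum into an achievable infinite-horizon stationary average, handling the initial-state dependence and the $\liminf$ carefully — to be the main obstacle; the first step is essentially a bookkeeping exercise with the Markov properties of input-driven FSCs, whereas the second step needs the DP machinery of Section~\ref{sec:DP} to be invoked cleanly.

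A clean way to organize the write-up is thus: (i) fix $s_0$ and an arbitrary $P(x^N\mid s_0)$, apply \eqref{LB}, and expand the reverse directed information term by term; (ii) use the deterministic state recursion to show the per-letter term equals $I(X_t,S_{t-1};Y_t\mid X^{t-1})$ under the induced distribution and that restricting to feedback-free, state-Markov policies loses nothing (so the outer maximization in Theorem~\ref{thm:FSC_cap} can be relaxed to these policies at the cost of a lower bound); (iii) identify the resulting object as the value of an average-reward DP with per-stage reward $I(X_t,S_{t-1};Y_t\mid X^{t-1})$ and invoke the standard equivalence of $\lim\frac1N\max$ and $\sup_{\text{stationary}}\liminf\frac1N$ for such problems on a finite state space, citing the DP development to follow. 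I would state explicitly that the distribution factorization $P_{X_t,S_{t-1},Y_t\mid X^{t-1}} = \beta_{t-1}(s_{t-1})P(x_t\mid s_{t-1})P(y_t\mid x_t,s_{t-1})$ is a direct consequence of $s_{t-1}$ being a function of $(s_0,x^{t-1})$ together with the FSC Markov property, and defer the detailed verification to Section~\ref{sec:proofs}.
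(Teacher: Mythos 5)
Your proposal follows essentially the same route as the paper: lower-bound $I(X^N;Y^N\mid s_0)$ by the reverse directed information, drop the dependence on past outputs term by term to get $\sum_t I(X_t;Y_t\mid X^{t-1},s_0)$, insert $S_{t-1}$ for free since it is determined by $(s_0,X^{t-1})$ for an input-driven FSC, restrict to input distributions of the form $P(x_t\mid s_{t-1})$ at the cost of a further inequality, and appeal to standard arguments (the paper cites Lemma~4 of Shemuel--Permuter for the $\lim\max=\sup\liminf$ identity rather than developing a Fekete-type argument). The only caution is in the per-letter step, where of the two options you float the correct one is the inequality $I(Y^t;X_t\mid X^{t-1})\ge I(Y_t;X_t\mid X^{t-1})$ (chain rule plus nonnegativity of conditional mutual information), not an equality under Markov inputs --- even for feedback-free inputs the chain rule leaves a residual term conditioned on $Y^{t-1}$, so only the inequality route goes through cleanly.
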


\begin{remark}\label{rem1}
	When $Y^N=X^N$, it can be seen that:
	\begin{align*}
	    \sum_{t=1}^{N}I(X_t,S_{t-1};Y_t \mid X^{t-1})&=\sum_{t=1}^{N} \left[H(X_t|X^{t-1})- H(X_t|X_t)\right]\\
	    &=H(X^N).
	\end{align*}
	Hence, the lower bound is tight, and is equal to the maximum entropy rate of the input process.
\end{remark}
The proof of Theorem \ref{Thm3} is given in section \ref{sec:proofs}. The computability of the lower bound follows from Theorem \ref{Thm4}.

\begin{theorem}\label{Thm4}
      The lower bound expression in Theorem \ref{Thm3} can be formulated as an infinite-horizon average-reward dynamic program, where the DP state is the probability vector $\beta_{t-1} = \bigl(P(s_{t-1}|x^{t-1}): s_{t-1}\in \mathscr{S}\bigr)$, the action is the stochastic matrix $[P(x_t|s_{t-1})]$, and the disturbance is the channel input, $x_t$.
\end{theorem}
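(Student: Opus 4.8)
\medskip
\noindent\emph{Proof plan.} The plan is to read off, directly from the expression in Theorem~\ref{Thm3}, the standard ingredients of a controlled Markov chain --- state, action, disturbance, (random) state-transition map, and per-stage reward --- and then to observe that $\frac1N\sum_{t=1}^N I(X_t,S_{t-1};Y_t\mid X^{t-1})$ is exactly an $N$-stage expected-reward functional, so that its normalized limit is an infinite-horizon average-reward DP value.

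Concretely, I would take the DP state at time $t$ to be $\beta_{t-1}=\bigl(P(s_{t-1}\mid x^{t-1}):s_{t-1}\in\mathscr S\bigr)\in\Delta(\mathscr S)$, the action to be the stochastic matrix $U_t=[P(x_t\mid s_{t-1})]$ (the action set being the state-independent, compact, convex set of $|\mathscr S|\times|\mathscr X|$ stochastic matrices), and the disturbance to be $x_t\in\mathscr X$. Two structural facts must be verified. (a) \emph{Controlled-Markov property of the disturbance}: from the factorization in Theorem~\ref{Thm3},
\[
P(x_t\mid x^{t-1})=\sum_{s}\beta_{t-1}(s)\,U_t(x_t\mid s),
\]
which depends on the past only through $(\beta_{t-1},U_t)$. (b) \emph{State recursion}: by Bayes' rule and the input-driven property $S_t=f(S_{t-1},X_t)$,
\[
\beta_t(s')=\frac{\sum_{s:\,f(s,x_t)=s'}\beta_{t-1}(s)\,U_t(x_t\mid s)}{\sum_{s}\beta_{t-1}(s)\,U_t(x_t\mid s)}=:\bigl(\mathcal F(\beta_{t-1},U_t,x_t)\bigr)(s'),
\]
so $\beta_t$ is a deterministic function of $(\beta_{t-1},U_t,x_t)$, well-defined whenever the denominator --- namely $P(x_t\mid x^{t-1})$ --- is positive, i.e.\ almost surely. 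Finally, define the per-stage reward $g(\beta,U):=I(\tilde X,\tilde S;\tilde Y)$ evaluated under $\tilde S\sim\beta$, $\tilde X\mid\tilde S\sim U(\cdot\mid\tilde S)$, $\tilde Y\mid\tilde X,\tilde S\sim P(\cdot\mid\tilde X,\tilde S)$; since, conditioned on $X^{t-1}=x^{t-1}$, the joint law of $(X_t,S_{t-1},Y_t)$ is exactly this law with $\beta=\beta_{t-1}$, we get $I(X_t,S_{t-1};Y_t\mid X^{t-1})=\mathbb E[g(\beta_{t-1},U_t)]$, and $g$ is bounded and continuous --- indeed $\beta\mapsto g(\beta,U)$ is concave --- being a difference of entropies of distributions that depend polynomially on $(\beta,U)$.

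With these identifications, every causal input policy induces $U_t$ as a function of the past, and
\[
\frac1N\sum_{t=1}^N I(X_t,S_{t-1};Y_t\mid X^{t-1})=\mathbb E\!\left[\frac1N\sum_{t=1}^N g(\beta_{t-1},U_t)\right]
\]
is the $N$-stage average reward of the controlled chain $(\beta_t)$ driven by actions $(U_t)$ and disturbances $(x_t)$ through $\mathcal F$, with disturbance law as in (a). Maximizing over policies and letting $N\to\infty$ is, by definition, an infinite-horizon average-reward DP. The $\lim\max$ form in Theorem~\ref{Thm3} is the limit of the normalized finite-horizon optimal values, and the $\sup\liminf$ form is the long-run average reward optimized over policies; that the two coincide --- and equal the DP's optimal average reward, which is finite and, because the initial state is fixed, a genuine constant --- is the standard average-reward DP statement, available under the connectedness hypothesis of Definition~\ref{def:connected}.

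I expect the only genuine work to be fact (b) and the verification that $g$ and the disturbance law depend on the history solely through $(\beta_{t-1},U_t)$: this is exactly where the defining input-driven structure $s_t=f(s_{t-1},x_t)$ enters, and a little care is needed at belief/action/disturbance triples where the normalizing denominator vanishes. Secondarily, one should check that the rewritten objective $\mathbb E[\frac1N\sum_t g(\beta_{t-1},U_t)]$ meets the regularity hypotheses --- continuity and boundedness of $g$, compactness of the action set, connectedness --- under which the associated Bellman equation characterizes the optimal value, thereby justifying the explicit solutions obtained in Sections~\ref{sec:DP}--\ref{sec:examples}. Everything else is bookkeeping.
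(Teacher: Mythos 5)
Your proposal is correct and follows essentially the same route as the paper: identify $\beta_{t-1}$ as the DP state, the stochastic matrix as the action, $x_t$ as the disturbance, verify that the reward $I(X_t,S_{t-1};Y_t\mid x^{t-1})$ and the disturbance law depend on the history only through $(\beta_{t-1},U_t)$ via the factorization $\beta_{t-1}(s)P(x_t|s)P(y_t|x_t,s)$, and exhibit the deterministic state update. The only cosmetic difference is your Bayes-normalized update for $\beta_t$, where the paper uses the unnormalized sum $\sum_{s}\beta_{t-1}(s)\mathds{1}\{f(s,x_t)=s_t\}$; the two coincide here because $s_0$ is known, so $\beta_{t-1}$ is degenerate.
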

We also propose the following alternative single-letter lower bound, when the channel also obeys the connectedness property defined in Section \ref{SS2}.

\begin{theorem}\label{Thm5}
	For a connected input-driven FSC, given a $\mathscr{V}$-graph on the inputs,
	\begin{equation*}
		C\geq I_{{Q}}(X;Y|S,V),
	\end{equation*}
where $\{Q(x|s,v)\} \in \mathscr{Q}$ is an aperiodic distribution that induces the stationary distribution $\pi(s,v)$ on the corresponding $(\mathscr{S},\mathscr{V})$-graph, and the random variables $X,Y,S,V$ are jointly distributed as $P_{X,Y,S,V}(x,y,s,v) = \pi(s,v) Q(x|s,v) P(y|x,s)$.
	
\end{theorem}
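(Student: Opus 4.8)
The plan is to derive Theorem~\ref{Thm5} from Theorem~\ref{Thm3} by exhibiting, for any fixed aperiodic $\{Q(x|s,v)\} \in \mathscr{Q}$, a sequence of input distributions $\{P(x_t|s_{t-1})\}_{t\ge1}$ whose associated Cesàro average of $I(X_t,S_{t-1};Y_t\mid X^{t-1})$ converges to $I_Q(X;Y\mid S,V)$. The natural candidate is obtained by ``lifting'' the $\mathscr{V}$-graph distribution to a state-only distribution: run the Markov chain on $\mathscr{S}\times\mathscr{V}$ driven by $\{Q(x|s,v)\}$, and observe that, since $s_t = f(s_{t-1},x_t)$ and $v_t = \Phi(v_{t-1},x_t)$ are both deterministic functions of the input history (given $s_0,v_0$), the pair $(s_{t-1},v_{t-1})$ is itself a deterministic function of $x^{t-1}$. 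Hence the randomized mapping $x^{t-1}\mapsto v_{t-1}$ is well-defined, and setting $P(x_t|s_{t-1}) := Q(x_t \mid s_{t-1}, v_{t-1}(x^{t-1}))$ makes sense as a (history-dependent, but Markov-in-$(s,v)$) choice. The first step is thus to verify that with this choice the joint law of $(S_{t-1},V_{t-1},X_t,Y_t)$, where $V_{t-1}$ is the $\mathscr{V}$-graph vertex reached along $x^{t-1}$, is exactly $\beta_{t-1}(s)\,[\text{marginalized over }v]\cdot Q(x|s,v)\,P(y|x,s)$, i.e.\ it is the $t$-step distribution of the $(\mathscr{S},\mathscr{V})$-Markov chain started from $(s_0,v_0)$.

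Next I would upgrade the conditioning. The key information-theoretic inequality is
\[
I(X_t,S_{t-1};Y_t\mid X^{t-1}) \;\ge\; I(X_t,S_{t-1},V_{t-1};Y_t\mid X^{t-1}) \;-\; \big(\text{correction}\big),
\]
but cleaner is to note that $V_{t-1}$ is a function of $X^{t-1}$, so conditioning on $X^{t-1}$ already conditions on $V_{t-1}$; moreover $S_{t-1}$ is also a function of $X^{t-1}$. Using the Markov structure $Y_t - (X_t,S_{t-1}) - (X^{t-1})$ and the fact that, under our chosen $P$, the conditional law $P(x_t,s_{t-1},y_t\mid x^{t-1})$ depends on $x^{t-1}$ only through $(s_{t-1},v_{t-1})$, one gets
\[
I(X_t,S_{t-1};Y_t\mid X^{t-1}) \;\ge\; I(X_t;Y_t \mid S_{t-1},V_{t-1}),
\]
where the right side is evaluated under the $t$-step law of the $(\mathscr{S},\mathscr{V})$-chain. (The inequality direction is the routine data-processing/convexity step: removing the extra randomness in $X^{t-1}$ beyond $(S_{t-1},V_{t-1})$ can only decrease mutual information of the channel $X_t \to Y_t$ given the relevant sufficient statistic; this is where I would be most careful, as one must check that $(S_{t-1},V_{t-1})$ is indeed a sufficient statistic for predicting $Y_t$ from the full history, which holds because $Y_t$ given $X_t$ depends only on $S_{t-1}$, and $X_t$ given the history depends only on $(S_{t-1},V_{t-1})$ by construction.)

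Then the ergodic step: since $\{Q(x|s,v)\}$ is aperiodic and lies in $\mathscr{Q}$, the $(\mathscr{S},\mathscr{V})$-Markov chain has a unique stationary distribution $\pi(s,v)$, so the $t$-step law converges to $\pi$, and by continuity of mutual information in the underlying distribution,
\[
\lim_{t\to\infty} I(X_t;Y_t \mid S_{t-1},V_{t-1}) \;=\; I_Q(X;Y\mid S,V),
\]
the single-letter quantity evaluated under $P_{X,Y,S,V} = \pi(s,v)Q(x|s,v)P(y|x,s)$. A Cesàro average of a convergent sequence converges to the same limit, so the $\liminf$ in the second line of Theorem~\ref{Thm3} applied to this particular $\{P(x_t|s_{t-1})\}$ is at least $I_Q(X;Y\mid S,V)$, and taking the supremum over admissible input distributions in Theorem~\ref{Thm3} gives $C \ge I_Q(X;Y\mid S,V)$. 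The role of the \emph{connectedness} hypothesis is to ensure Theorem~\ref{Thm3}'s bound is valid for this (indecomposable) setup and that the initial state $s_0$ does not obstruct reaching the stationary regime; I would invoke Definition~\ref{def:connected} precisely at the point of arguing that the contribution of the transient (the first $o(N)$ terms, and the dependence on $s_0$) vanishes in the normalized limit. The main obstacle I anticipate is the sufficiency/data-processing step in the second paragraph — making rigorous that conditioning on the full input history $X^{t-1}$ yields the same conditional mutual information as conditioning on the reduced statistic $(S_{t-1},V_{t-1})$, given that our construction of $P(x_t|s_{t-1})$ is deliberately chosen to make $(S_{t-1},V_{t-1})$ a sufficient summary; everything else is bookkeeping with the conservation law~\eqref{eq:conserv}, Theorem~\ref{Thm3}, and standard Markov-chain ergodic theory.
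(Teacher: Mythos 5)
Your overall strategy is the same as the paper's: construct the input distribution from $\{Q(x|s,v)\}$ by tracking the $(\mathscr{S},\mathscr{V})$-pair along the input history, reduce the per-letter conditional mutual information to $I(X_t;Y_t\mid S_{t-1},V_{t-1})$ via the Markov structure, and pass to the stationary limit by ergodicity of the aperiodic chain. The paper's reduction step is in fact an equality, $I(X_t;Y_t\mid S_{t-1},X^{t-1},s_0)=I(X_t;Y_t\mid S_{t-1},V_{t-1})$, following from the Markov chain $X^{t-1}\textbf{\textemdash}(S_{t-1},V_{t-1})\textbf{\textemdash}(X_t,Y_t)$ in \eqref{MC}, so your hedged inequality and the worry you flag about ``sufficiency'' dissolve once that chain is written down; your justification points at exactly the right structural fact.

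There is, however, one step that would fail as literally stated. Your constructed distribution $Q(x_t\mid s_{t-1},v_{t-1}(x^{t-1}))$ depends on the full history $x^{t-1}$ through $v_{t-1}$, so it is \emph{not} of the form $P(x_t|s_{t-1})$, and hence does not lie in the feasible set of the supremum in the final displayed bound of Theorem~\ref{Thm3} (which was obtained precisely by \emph{restricting} to state-only input distributions, at the possible cost of an inequality). You cannot ``take the supremum over admissible input distributions in Theorem~\ref{Thm3}'' to finish. The fix is what the paper does: go back to the intermediate bound $C_N \ge \max_{\{P(x_t|x^{t-1},s_0)\}}\frac{1}{N}\sum_t I(X_t;Y_t\mid X^{t-1},s_0)$ from \eqref{ineq:thm3.1_proof}--\eqref{ineq1:thm3.3_proof}, which allows arbitrary history-dependent inputs, and plug the $Q$-induced distribution in there. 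Relatedly, you should make the role of connectedness concrete: it is invoked (as in Lemma~1 of the $Q$-graph paper) to guarantee the existence of a vertex $v_0$ with $(s_0,v_0)$ inside the single closed communicating class of $\mathscr{G}(\{Q(x|s,v)\})$, which is what licenses convergence of the $t$-step law to $\pi(s,v)$ from the given initial state; ``the transient vanishes in the normalized limit'' is the right intuition but the specific mechanism is the choice of $v_0$.
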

The proof of Theorem \ref{Thm5} is presented in Section \ref{sec:proofs}. Theorems \ref{Thm4} and \ref{Thm5} provide powerful techniques to arrive at analytical lower bounds. We then applied Theorem \ref{Thm3} to input-constrained memoryless channels, and obtained the following lower bounds:
\begin{itemize}
    \item The capacity of the $(d,\infty)$-RIC BSC($p$) satisfies
    \begin{equation*}
        C\geq \max_{a\in [0,1]}\frac{h_b(ap+\bar{a}\bar{p})-h_b(p)}{ad+1}.
    \end{equation*}This result holds for all $p\in [0,1]$, and, for $d=1$, numerical evaluations indicate that the DP bound is close to the asymptotic bounds of \cite{Han1} (as $p\rightarrow 0$), and of \cite{Ordentlich} (as $p\rightarrow 0.5$).
    \item The capacity of the $(d,k)$-RIC BSC($p$) obeys 
    \begin{equation*}
        C\geq \max_{a_d,\ldots,a_{k-1}}\frac{\sum\limits_{i=d}^{k-1}(h_b(a_ip+\bar{a_i}\bar{p})-h_b(p))\prod\limits_{j=d}^{i-1}(1-a_j)}{d+1+\sum\limits_{i=d}^{k-1}\prod\limits_{j=d}^{i}(1-a_j)},
    \end{equation*}
    where $a_d,\ldots,a_{k-1}\in [0,1]$. These lower bounds hold for arbitrary $0\leq d<k<\infty$.
    \item For $0 \le d < k \le \infty$, the capacity of the $(d,k)$-RIC BEC($\epsilon$) satisfies $C\geq C_{d,k} \cdot\bar{\epsilon}$, where $C_{d,k}$ is the noiseless capacity of the $(d,k)$-RLL constraint. In particular, when $d=0$, the bound becomes tight as $k\rightarrow \infty$, and for $d=1$ and $k=\infty$, it extends the asymptotic results of \cite{Han2}.
\end{itemize}

\section{DP Formulation}\label{sec:DP}
In this section, we shall formulate the lower bound in Theorem \ref{Thm3} as a DP problem, thereby showing the validity of Theorem \ref{Thm4}. We also introduce the Bellman equation, that provides a sufficient condition for optimality of the reward.

\subsection{DP Problem}
An infinite-horizon average-reward DP is defined by the tuple $(\mathscr{Z},\mathscr{U},\mathscr{W},F,P_Z,P_w,g)$. We consider a discrete-time dynamic system evolving according to:
\begin{equation*}
    z_t=F(z_{t-1},u_t,w_t),\quad t=1,2,\ldots
\end{equation*}
Each state, $z_t$, takes values in a Borel space $\mathscr{Z}$, each action, $u_t$, in a compact subset $\mathscr{U}$ of a Borel space, and each disturbance, $w_t$, in a measurable space $\mathscr{W}$. The initial state, $z_0$, is drawn from $P_Z$, and the disturbance, $w_t$, from $P_w(\cdot|z_{t-1},u_t)$. At time $t$, the action $u_t$ is equal to $\mu_t(h_t)$, where $h_t:=(z_0,w^{t-1})$ is the history up to time $t$. A policy $\pi$ is defined as $\pi:=\{\mu_1,\mu_2,\ldots\}$.

The aim is to maximize the average reward, given a bounded reward function $g:\mathscr{Z}\times \mathscr{U}\rightarrow \mathbb{R}$. For a policy $\pi$, the average reward is:
\begin{equation}\label{eq:reward}
    \rho_{\pi}:=\liminf_{N\rightarrow \infty} \frac{1}{N}\mathbb{E}_{\pi}\left[\sum_{t=1}^{N}g(Z_{t-1},\mu_t(h_t))\right],
\end{equation}
where the subscript $\pi$ indicates that the actions are generated by $\pi=(\mu_1,\mu_2,\ldots)$. The optimal average reward is defined as $\rho^{*}:=\sup_{\pi} \rho_{\pi}$.

\subsection{Lower Bound of Theorem \ref{Thm3} as a DP Problem}\label{DP}
From the DP formulation of the average reward in equation \eqref{eq:reward} and from Theorem \ref{Thm3}, the DP state is chosen to be $z_{t-1}\stackrel{\Delta}{=}\beta_{t-1}=\bigl(P(s_{t-1}|x^{t-1}): s_{t-1}\in \mathscr{S}\bigr)$. The action, $u_t$, is the stochastic matrix $[P(x_t|s_{t-1})]$, and the disturbance, $w_t$, is the channel input, $x_t$, which takes values in $\mathscr{X}$. Further, we define the reward function,
\begin{equation*}
    g(\beta_{t-1},u_t)=I(X_t,S_{t-1};Y_t|x^{t-1}).
\end{equation*}
It is easy to see that the average of the reward function is indeed the lower bound of Theorem \ref{Thm3}. We now verify that the formulation above satisfies the properties of a DP problem. Firstly, we note that the conditional distribution
\begin{equation}\label{eq1}
    P(x_t,s_{t-1},y_t|x^{t-1})=z_{t-1}(s_{t-1})P(x_t|s_{t-1})P(y_t|x_t,s_{t-1}),
\end{equation}
which depends only on the previous state and the action. Therefore, the reward function, $g(\cdot)$ at time $t$, is a function of only $z_{t-1}$ and $u_t$. Secondly, it is easy to check that the disturbance distribution depends only on $z_{t-1}$ and $u_t$: 
\begin{align*}
    &P(w_t|w^{t-1},z^{t-1},u^t)\\&=P(x_t|x^{t-1},\beta^{t-1},u^t)\\
    &\stackrel{(a)}{=}\sum_{y_t,s_{t-1}} \beta_{t-1}(s_{t-1})P(x_t|s_{t-1},x^{t-1},\beta^{t-1},u^{t})P(y_t|x_t,s_{t-1})\\
    &\stackrel{(b)}{=}\sum_{y_t,s_{t-1}}\beta_{t-1}(s_{t-1})P(x_t|s_{t-1},\beta_{t-1},u_t)P(y_t|x_t,s_{t-1})\\
    &=P(x_t|\beta_{t-1},u_t)\\
    &=P(w_t|\beta_{t-1},u_t),
\end{align*}
where (a) follows from the fact that the value of $P(s_{t-1}|x^{t-1},\beta^{t-1},u^{t})$ is determined by $\beta_{t-1}(s_{t-1})$, and (b) follows from the fact that the distribution of $x_t$ depends only on the triplet $(s_{t-1},\beta_{t-1},u_t)$. Lastly, we need to show that there exists a deterministic function $F:\mathscr{Z}\times \mathscr{U}\times \mathscr{W}\rightarrow \mathscr{Z}$, such that $z_t=F(z_{t-1},u_t,w_t)$. But we know that
\begin{equation*}
    z_t(s_t) = P(s_t|x^t) = \sum_{s_{t-1}} P(s_{t-1}|x^{t-1})\mathds{1}\{f(s_{t-1},x_t)=s_t\}.
\end{equation*}
Clearly, the next DP state is a function of the previous state and disturbance alone, and hence, the formulation above is a DP problem.

\subsection{Bellman Equation}
The Bellman equation provides a sufficient condition that helps us verify that a given average reward is indeed optimal.

\begin{theorem}[\cite{Arapos_etal_93}, Thm.~6.2]  
\label{thm:Bellman}
    If $\rho \in \mathbb{R}$ and a bounded function $h:\mathscr{Z}\rightarrow \mathbb{R}$ satisfies $\forall z\in \mathscr{Z}$,
    \begin{equation*}
        \rho + h(z) = \sup_{u\in \mathscr{U}}\left[g(z,u) + \int P_W(dw|z,u)h(F(z,u,w))\right],
    \end{equation*}
    then $\rho^{*}=\rho$.
\end{theorem}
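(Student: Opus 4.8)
The plan is to establish the two inequalities $\rho^{*}\le \rho$ and $\rho^{*}\ge \rho$ separately, treating the Bellman equation as a ``relative value'' identity and exploiting only the boundedness of $h$ in the first part and a measurable selection in the second.

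For the direction $\rho^{*}\le \rho$, I would fix an arbitrary (history-dependent) policy $\pi=(\mu_1,\mu_2,\ldots)$. Since the right-hand side of the Bellman equation is a supremum over $u$, the equation implies the pointwise inequality $g(z,u)\le \rho + h(z) - \int P_W(dw\mid z,u)\,h(F(z,u,w))$ for every $z\in\mathscr{Z}$ and $u\in\mathscr{U}$. Applying this with $z=Z_{t-1}$ and $u=U_t=\mu_t(h_t)$, taking $\mathbb{E}_\pi$, and using that $Z_t=F(Z_{t-1},U_t,W_t)$ with $W_t\sim P_W(\cdot\mid Z_{t-1},U_t)$ given the history, the tower property gives $\mathbb{E}_\pi[g(Z_{t-1},U_t)]\le \rho + \mathbb{E}_\pi[h(Z_{t-1})]-\mathbb{E}_\pi[h(Z_t)]$. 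Summing over $t=1,\ldots,N$ telescopes the $h$-terms, so $\sum_{t=1}^{N}\mathbb{E}_\pi[g(Z_{t-1},U_t)]\le N\rho + \mathbb{E}_\pi[h(Z_0)]-\mathbb{E}_\pi[h(Z_N)]$. Dividing by $N$ and taking $\liminf_{N\to\infty}$, the boundary term $(\mathbb{E}_\pi[h(Z_0)]-\mathbb{E}_\pi[h(Z_N)])/N$ vanishes because $h$ is bounded, yielding $\rho_\pi\le\rho$; since $\pi$ was arbitrary, $\rho^{*}=\sup_\pi \rho_\pi\le\rho$. Note this half needs no regularity assumptions beyond boundedness of $h$.

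For the direction $\rho^{*}\ge\rho$, I would produce a (near-)optimal stationary policy. Under compactness of $\mathscr{U}$ together with the continuity/measurability hypotheses underlying \cite{Arapos_etal_93}, the supremum in the Bellman equation is attained by a measurable selector $\mu^{*}:\mathscr{Z}\to\mathscr{U}$; running the telescoping computation of the previous paragraph with $\pi^{*}=(\mu^{*},\mu^{*},\ldots)$ turns every inequality into an equality, so $\frac1N\sum_{t=1}^{N}\mathbb{E}_{\pi^{*}}[g(Z_{t-1},U_t)]=\rho + (\mathbb{E}_{\pi^{*}}[h(Z_0)]-\mathbb{E}_{\pi^{*}}[h(Z_N)])/N\to\rho$, hence $\rho_{\pi^{*}}=\rho$ and $\rho^{*}\ge\rho$. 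If the supremum is not attained, I would instead, for each $\varepsilon>0$, pick a measurable $\varepsilon$-optimal selector $\mu_\varepsilon$ so that $g(z,\mu_\varepsilon(z))+\int P_W(dw\mid z,\mu_\varepsilon(z))\,h(F(z,\mu_\varepsilon(z),w))\ge \rho+h(z)-\varepsilon$, rerun the argument to get $\rho_{\pi_\varepsilon}\ge\rho-\varepsilon$, and let $\varepsilon\downarrow 0$. Combining the two directions gives $\rho^{*}=\rho$.

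The telescoping algebra is entirely routine once $h$ is known to be bounded; the genuine obstacle is the measurable-selection step in the second part, namely guaranteeing that an (approximately) argmax-achieving action map $z\mapsto\mu^{*}(z)$ can be chosen measurably so that $\pi^{*}$ is an admissible policy. This is exactly the technical content packaged into the cited theorem, and in our setting it would be discharged by checking that $\mathscr{U}$ is compact and that $g(z,u)$ and the transition kernel $P_W(\cdot\mid z,u)$ depend appropriately continuously on $(z,u)$.
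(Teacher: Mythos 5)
Your argument is correct, but note that the paper does not prove this statement at all: it is imported verbatim as Theorem~6.2 of the cited reference \cite{Arapos_etal_93}, so there is no in-paper proof to compare against. What you have written is the standard verification-theorem argument for the average-reward Bellman equation, and both halves are sound: the direction $\rho^*\le\rho$ uses only the pointwise inequality implied by the supremum, the tower property, telescoping of the $h$-terms, and boundedness of $h$ to kill the boundary term $(\mathbb{E}_\pi[h(Z_0)]-\mathbb{E}_\pi[h(Z_N)])/N$; the direction $\rho^*\ge\rho$ correctly reduces to exhibiting a measurable exact or $\varepsilon$-optimal selector and rerunning the telescoping with (near-)equality. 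You also correctly identify the genuine technical content --- measurable selection over the Borel state space $\mathscr{Z}$ and compact action set $\mathscr{U}$, which requires joint measurability (and, for exact selectors, suitable semicontinuity) of $(z,u)\mapsto g(z,u)+\int P_W(dw\mid z,u)\,h(F(z,u,w))$ --- and this is precisely what the hypotheses of the cited theorem are designed to guarantee. One small point worth making explicit: since $\rho_\pi$ is defined via $\liminf$, the upper bound $\rho_\pi\le\rho$ follows because every Ces\`aro average is at most $\rho+O(1/N)$, so even the $\limsup$ is bounded by $\rho$; your write-up is consistent with this but does not say it.
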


\section{$(d,k)$-RIC Channels} \label{sec:examples}

In this section, we study the application of Theorem \ref{Thm4} to certain input-constrained DMCs. Specifically, we impose $(d,k)$-RLL constraints, defined below, on input sequences.

\begin{definition}
	A binary sequence $\mathbf{x}=(x_1,x_2,\ldots)\in \{0,1\}^{*}$ is said to obey the $(d,k)$-RLL constraint, (for $0\leq d<k\leq \infty$) if each run of $0$s in $\mathbf{x}$ has length at most $k$, and any pair of successive $1$s is separated by at least $d$ $0$s.
	\label{def:RLL}
\end{definition}


It is easily verified that $(d,k)$-RIC DMCs are input-driven. Indeed, we take the state space $\mathscr{S}$ to be $\{0,1,2,\ldots,d\}$ if $k = \infty$, and $\{0,1,2,\ldots,k\}$ if $k < \infty$. The state transitions are shown in the edge-labelled directed graphs in Figs.~\ref{fig2} and \ref{fig3}: an edge $s \stackrel{x}{\longrightarrow} s'$ represents the transition $s' = f(s,x)$.
\begin{figure}[htbp]
  \centering
  \includegraphics[width=0.3\textwidth]{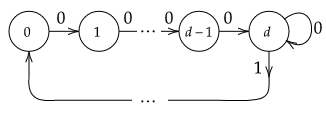}
   \caption{State transitions for the $(d,\infty)$-RLL constraint.}
   \label{fig2}
\end{figure}

\begin{figure}[htbp]
  \centering
  \includegraphics[width=0.4\textwidth]{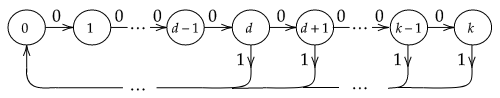}
   \caption{State transitions for the $(d,k)$-RLL constraint, $k < \infty$.}
   \label{fig3}
\end{figure}

Our assumption that the encoder and decoder share knowledge of the initial state $s_0$ is easily realized in this context, as they can \emph{a priori} agree upon a choice of $s_0$, e.g., $s_0 = 0$.

\begin{figure*}[t]
  \centering
  \includegraphics[width=0.5\textwidth]{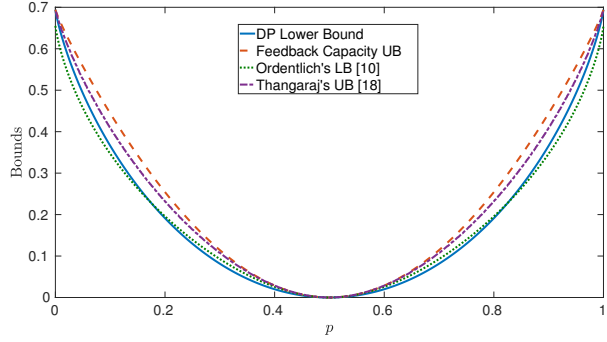}
   \caption{Comparison of the DP lower bound for the $(1,\infty)$-RIC BSC($p$) with bounds in \cite{Ordentlich}, \cite{NK_BSC} and \cite{Thangaraj}.}
   \label{fig4}
\end{figure*}

\begin{theorem}\label{BSC}
	The capacity of the $(d,\infty)$-RIC binary symmetric channel with cross-over probability $p$ obeys
	\begin{equation*}
	C_{d,\infty}^{\text{BSC}(p)}\geq \max_{a\in [0,1]} \frac{h_b(ap+\bar{a}\bar{p})-h_b(p)}{ad+1}.
	\end{equation*}
\end{theorem}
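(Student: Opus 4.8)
The plan is to apply Theorem~\ref{Thm4}, which recasts the lower bound of Theorem~\ref{Thm3} as an infinite-horizon average-reward DP, and then to solve the associated Bellman equation (Theorem~\ref{thm:Bellman}) explicitly for the $(d,\infty)$-RIC BSC$(p)$. First I would set up the DP state: for the $(d,\infty)$-RLL constraint the channel state space is $\mathscr{S}=\{0,1,\ldots,d\}$, and the DP state $\beta_{t-1}=(P(s_{t-1}|x^{t-1}):s_{t-1}\in\mathscr{S})$ is a probability vector on $d+1$ symbols. The key structural observation — which I would prove by induction on $t$ — is that the reachable DP states form a very small finite set: because the last input $x_t$ together with the constraint graph of Fig.~\ref{fig2} almost determines the state, and because a ``$1$'' deterministically resets the chain to state $0$, the posterior $\beta_{t-1}$ is supported on at most a bounded set of vectors. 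In fact, once we allow only the stationary/optimal policy, I expect the recurrent DP states to reduce to essentially one vector (or a short cycle of $d+1$ vectors), reflecting the fact that the input-constraint automaton together with a Markov-$1$ input distribution has a unique stationary distribution over $\{0,1,\ldots,d\}$.

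Next I would compute the per-stage reward. Using the identity from Section~\ref{DP},
\[
g(\beta_{t-1},u_t)=I(X_t,S_{t-1};Y_t\mid x^{t-1})=H(Y_t\mid x^{t-1})-H(Y_t\mid X_t,S_{t-1}),
\]
and since the BSC$(p)$ output given the input is a coin flip with bias $p$ regardless of the state, the second term is exactly $h_b(p)$. For the first term, note that $Y_t$ is the BSC-corrupted version of $X_t$, and under a Markov-$1$ input policy the only nontrivial choice is the probability $a$ of emitting a $1$ when the constraint permits it (i.e., from state $d$); from all other states the input is forced to $0$. A short computation shows the marginal probability that $X_t=1$ equals some $\gamma(a,d)$, and hence $P(Y_t=1\mid x^{t-1})$ is of the form $\gamma\bar p+\bar\gamma p$ after averaging appropriately; the clean form in the theorem, $h_b(ap+\bar a\bar p)$, arises because the relevant conditional output entropy is evaluated only in the state where a genuine choice is made, with the other states contributing deterministic (zero-entropy) outputs. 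I would then write the average reward as a renewal-reward ratio: numerator $=$ expected informational reward accrued per renewal cycle $=h_b(ap+\bar a\bar p)-h_b(p)$, denominator $=$ expected cycle length $=ad+1$ (a cycle being: wait in states $0,\ldots,d-1$ for $d$ steps, then at state $d$ flip a $1$ with probability $a$ — geometric with mean $1/a$ ``attempts,'' each attempt costing $d$ extra $0$s before the next $1$). This gives precisely the claimed expression $\dfrac{h_b(ap+\bar a\bar p)-h_b(p)}{ad+1}$ before maximizing over $a\in[0,1]$.

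To make this rigorous rather than heuristic, the final and most delicate step is to exhibit an explicit bounded function $h:\mathscr{Z}\to\mathbb{R}$ and a scalar $\rho$ satisfying the Bellman equation of Theorem~\ref{thm:Bellman} with $\rho=\max_{a\in[0,1]}\frac{h_b(ap+\bar a\bar p)-h_b(p)}{ad+1}$; by that theorem this certifies $\rho^\ast=\rho$, and then Theorem~\ref{Thm3} gives $C\ge\rho$. I would guess a piecewise-affine $h$ that is constant on the reachable recurrent states and defined via the relative-value (differential cost) interpretation — $h(z)$ being the expected excess reward starting from $z$ relative to the long-run average — and then verify the $\sup_u$ on the right-hand side is attained at the optimal $a$. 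The main obstacle I anticipate is exactly this verification: checking that the supremum over all stochastic matrices $[P(x_t|s_{t-1})]$ (not just Markov-$1$ policies of the guessed form) is indeed achieved by the conjectured policy, and that the guessed $h$ is genuinely bounded on the whole (a priori infinite) Borel DP-state space $\mathscr{Z}$, not merely on the recurrent set. One clean way around the boundedness worry is to first argue that any policy's reachable states after one step lie in the finite recurrent set, so $h$ need only be specified there (with an arbitrary bounded extension elsewhere), and the Bellman verification reduces to finitely many scalar optimizations in the single variable $a$ — each of which is the elementary calculus problem of maximizing $\frac{h_b(ap+\bar a\bar p)-h_b(p)}{ad+1}$.
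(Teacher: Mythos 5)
Your proposal follows essentially the same route as the paper: the input-driven property collapses the DP state space to the $d+1$ point-mass vectors, the admissible action reduces to the single parameter $a=P(1\mid d)$, and the Bellman equation splits into $d$ trivial recursions $\rho+h(i)=h(i+1)$ plus one scalar optimization at state $d$, yielding $\rho=\max_{a\in[0,1]}\frac{h_b(ap+\bar{a}\bar{p})-h_b(p)}{ad+1}$. The only slips are in your heuristic asides and are harmless to the rigorous plan: the relative-value function is an arithmetic progression ($h(i+1)=h(i)+\rho$), not constant on the recurrent states, and in states $0,\ldots,d-1$ the reward vanishes because $H(Y_t\mid x^{t-1})=h_b(p)$ there (the output is a BSC flip, not deterministic), while the cycle-length accounting should be: $1/a$ reward-earning visits to state $d$ per cycle of expected length $d+1/a$, giving the denominator $ad+1$.
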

\begin{proof}
The DP state, $\mathbf{z}$, is a probability vector on $\mathscr{S} = \{0,1,2,\ldots,d\}$, with elements indexed as $z_s,$ $s\in \mathscr{S}$. As the channel is input-driven, we have $z_s \in \{0,1\}\, \forall s\in \mathscr{S}$, and exactly one $z_s$ can be equal to $1$. With some abuse of notation, we write the DP state as $z=i$ when $z_i=1$, for $0\leq i\leq d$.
	
	The disturbance $w$, is equal to $x$, and the action, $u$, is the stochastic matrix
	\renewcommand{\kbldelim}{[}
	\renewcommand{\kbrdelim}{]}
	
\[u\text{ }=	
\begin{blockarray}{ccc}
& 0 & 1 \\
\begin{block}{c[cc]}
  0 & 1 & 0 \\
  1 & 1 & 0 \\
  \vdots & \vdots & \vdots \\
  d-1 & 1 & 0 \\
  d & 1-a & a \\
\end{block}
\end{blockarray}
 \]
	
	where the rows correspond to the channel states, and $a\in [0,1]$. The next DP state is given by:
	\begin{equation*}
	F(z,u,x)=\psi(z,x),
	\end{equation*}
	where $\psi(z,x)=z^\prime$, if the edge $(z,z^\prime)$ labelled by $x$ exists in the presentation.
	From the conditional distribution in \eqref{eq1}, the reward function can be computed as:
	\begin{align*}
	g(z,u)&=H(Y|z,u)-H(Y|X)\\
	&=h_b\left(z_d(a+p-2ap)+p\sum\limits_{i=0}^{d-1}z_i\right)-h_b(p).
	\end{align*}
	Solving the Bellman equation of Theorem~\ref{thm:Bellman} entails identifying a scalar $\rho_p$ such that for a function $h_p:\mathscr{Z}\rightarrow \mathbb{R}$,
	\begin{multline}\label{eq2}
	\rho_p + h_p(z)=\max_{a\in [0,1]} [g(z,a)+(1-az_d)h_p(\psi(z,0))\\+az_dh_p(\psi(z,1))
	],
	\end{multline}
	for each $z\in \mathscr{Z}$. The set of $d+1$ equations in \eqref{eq2} can be split as:
	\begin{equation*}
	    \rho_p + h_p(i) = h_b(p)-h_b(p)+h_p(i+1)=h_p(i+1),
	\end{equation*}
	for $0\leq i\leq d-1$, since $z_d=0$, and
	\begin{equation}\label{BSC:eq1}
	    \rho_p = \max_a \left[h_b(a+p-2ap)-h_b(p)+a(h_p(0)-h_p(d))\right].
	\end{equation}
	From the first set of $d$ equations, we arrive at the fact that $d\rho_p=h_p(d)-h_p(0)$. Substituting this in equation \eqref{BSC:eq1}, we note that:
	\begin{equation}\label{eq3}
		\max_{a\in [0,1]}\left[h_b(ap+\bar{a}\bar{p})-h_b(p)-(ad+1)\rho_p\right]=0.
	\end{equation}
	Clearly, the choice
	\begin{equation*}
	\rho_p = \max_{a\in [0,1]}\frac{h_b(ap+\bar{a}\bar{p})-h_b(p)}{ad+1}
	\end{equation*}
	satisfies \eqref{eq3}.
\end{proof}

For $d=1$, figure \ref{fig4} shows plots of our DP lower bound, alongside the lower bound of Ordentlich \cite{Ordentlich}. Upper bounds on the capacity in the form of the feedback capacity of the $(1,\infty)$-RIC BSC($p$) \cite{NK_BSC}, and the dual capacity upper bound of Thangaraj \cite{Thangaraj} are also shown. Numerical evaluations indicate that the DP lower bound is close to the asymptotic bounds in \cite{Han1} as $p\rightarrow 0$, and in \cite{Ordentlich} as $p\rightarrow 0.5$, and extends these results to all values of $p$. Plots of the DP lower bound for $d=1,2,3$, are given in figure \ref{fig7}, with the unconstrained ($d=0$) capacity also indicated.

\begin{figure*}[t]
  \centering
  \includegraphics[width=0.5\textwidth]{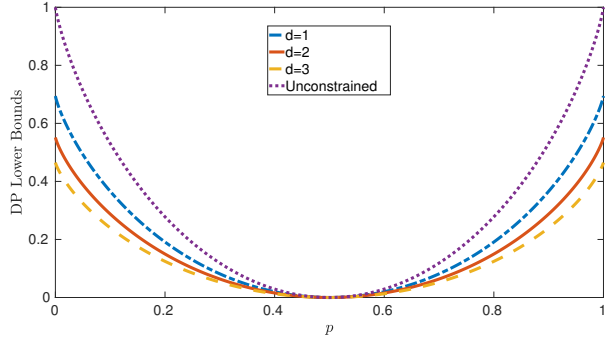}
   \caption{DP lower bounds for the $(1,\infty)$, $(2,\infty)$, $(3,\infty)$-RIC BSC($p$).}
   \label{fig7}
\end{figure*}

We now consider the $(d,k)$-RIC BSC($p$), with $k < \infty$.

\begin{theorem}\label{BSC2}
	The capacity of the $(d,k)$-RIC $(k<\infty)$ binary symmetric channel with cross-over probability $p$ satisfies
	\begin{equation*}
	C_{d,k}^{\text{BSC}(p)}\geq \max_{a_d,\ldots,a_{k-1}}\frac{\sum\limits_{i=d}^{k-1}\bigl(h_b(a_ip+\bar{a}_i\bar{p})-h_b(p)\bigr)\prod\limits_{j=d}^{i-1}(1-a_j)}{d+1+\sum\limits_{i=d}^{k-1}\prod\limits_{j=d}^{i}(1-a_j)},
    \end{equation*}
    where the maximization is over $a_d,\ldots,a_{k-1}\in [0,1]$, and an empty product is, by convention, equal to $1$.
\end{theorem}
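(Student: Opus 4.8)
\textbf{Proof proposal for Theorem~\ref{BSC2}.}

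The plan is to mimic the proof of Theorem~\ref{BSC} but with the larger state space $\mathscr{S}=\{0,1,\ldots,k\}$ that arises when $k<\infty$ (Fig.~\ref{fig3}). As before, because the channel is input-driven, the DP state $\mathbf{z}$ is a vertex of the simplex with a single coordinate equal to $1$, so I may write $z=i$ for $0\le i\le k$. The action $u$ is again a stochastic matrix whose rows are indexed by states; the constraint structure forces $P(x_t=1\mid s_{t-1})=0$ for $s_{t-1}\in\{0,\ldots,d-1\}$ and $P(x_t=0\mid s_{t-1})=0$ for $s_{t-1}=k$ (a run of $k$ zeros must be broken by a $1$), while for each $i\in\{d,\ldots,k-1\}$ the transmit-a-$1$ probability is a free parameter $a_i\in[0,1]$. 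First I would write down the reward $g(z,u)=H(Y\mid z,u)-H(Y\mid X)$: in state $i$ the channel sees input $1$ with probability $a_i$ (taking $a_i\in\{0,1\}$ at the boundary states as dictated above), so $g(i,u)=h_b(a_i p+\bar a_i\bar p)-h_b(p)$, with $g(i,u)=h_b(p)-h_b(p)=0$ for $i<d$ and $g(k,u)=h_b(p)-h_b(p)=0$ as well (forced input $1$... wait: in state $k$ the input is forced to $1$, so $H(Y\mid z,u)=h_b(\bar p)=h_b(p)$ and $g(k,u)=0$). Hence only states $d,\ldots,k-1$ contribute nonzero reward.

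Next I would set up the Bellman equation of Theorem~\ref{thm:Bellman}. With $\psi(z,x)$ the next-state map read off Fig.~\ref{fig3} ($\psi(i,0)=i+1$ for $i<k$, $\psi(i,1)=0$ for $i\ge d$), the equation at state $z=i$ reads
\begin{equation*}
\rho_p+h_p(i)=\max_{a_i}\bigl[\,g(i,u)+(1-a_i)h_p(i+1)+a_i h_p(0)\,\bigr],
\end{equation*}
where for $i<d$ the max is vacuous ($a_i\equiv0$) giving $\rho_p+h_p(i)=h_p(i+1)$, and for $i=k$ ($a_k\equiv1$) giving $\rho_p+h_p(k)=h_p(0)$. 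The first $d$ relations telescope to $h_p(i)=h_p(0)+i\rho_p$ for $0\le i\le d$; in particular $h_p(d)=h_p(0)+d\rho_p$, and the $i=k$ relation gives $h_p(k)=h_p(0)-\rho_p$. The plan is then to solve the remaining $k-d$ equations for $i=d,\ldots,k-1$ by back-substitution. Writing $\delta_i:=h_p(i)-h_p(0)$, each equation becomes $\rho_p+\delta_i=\max_{a_i}[\,h_b(a_ip+\bar a_i\bar p)-h_b(p)+(1-a_i)\delta_{i+1}\,]$, a recursion that, starting from the known $\delta_k=-\rho_p$ and $\delta_d=d\rho_p$, should pin down $\rho_p$.

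The main obstacle I anticipate is cleanly resolving this coupled system: unlike the $d=\infty$... unlike the $(d,\infty)$ case, here there are $k-d$ genuinely free actions $a_d,\ldots,a_{k-1}$ and the value differences $\delta_{d+1},\ldots,\delta_{k-1}$ must be eliminated simultaneously. I would attack it by iterating the recursion from $i=k-1$ downward: at the optimum the bracket at state $i$ equals $\rho_p+\delta_i$, and substituting the expression for $\delta_{i+1}$ obtained from the state-$(i+1)$ equation should, after collecting the products $\prod_{j}(1-a_j)$ that appear as each $(1-a_i)$ factor cascades, yield a single scalar identity of the form
\begin{equation*}
\max_{a_d,\ldots,a_{k-1}}\Bigl[\sum_{i=d}^{k-1}\bigl(h_b(a_ip+\bar a_i\bar p)-h_b(p)\bigr)\prod_{j=d}^{i-1}(1-a_j)-\Bigl(d+1+\sum_{i=d}^{k-1}\prod_{j=d}^{i}(1-a_j)\Bigr)\rho_p\Bigr]=0,
\end{equation*}
exactly analogous to \eqref{eq3}. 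The denominator $d+1+\sum_{i=d}^{k-1}\prod_{j=d}^{i}(1-a_j)$ has a transparent interpretation as (roughly) the expected inter-$1$ run length under the Markov chain induced by $(a_d,\ldots,a_{k-1})$: a $1$ is always followed by $d$ forced $0$s (contributing $d+1$) and then, in state $d+\ell$, an extra $0$ with probability $\prod_{j=d}^{d+\ell-1}(1-a_j)$; I expect the $\prod_{j=d}^{i}$ rather than $\prod_{j=d}^{i-1}$ in the denominator to come out of carefully tracking one extra step in the telescoping, and verifying this bookkeeping is where I would spend the most care. Once the displayed identity is established, the claimed value of $\rho_p$ is the unique scalar making the maximum equal to $0$ (the bracket is affine and decreasing in $\rho_p$ for each fixed action tuple, and the max over actions of a value that is $0$ at the claimed $\rho_p$), so Theorem~\ref{thm:Bellman} gives $\rho^*=\rho_p$, which by Theorem~\ref{Thm4} and Theorem~\ref{Thm3} lower-bounds the capacity. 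A final check that the optimal policy lies in an aperiodic, single-closed-class regime (needed so that the DP value genuinely equals the liminf average reward) completes the argument; for $k<\infty$ this is automatic since every state is reached from $0$ within $k$ steps.
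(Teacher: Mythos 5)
Your proposal follows essentially the same route as the paper's proof: the identical DP state space $\{0,1,\ldots,k\}$, the same action matrix with forced $0$s in states $0,\ldots,d-1$, free parameters $a_d,\ldots,a_{k-1}$, and a forced $1$ in state $k$, the same per-state reward, and the same telescoping of the first $d$ Bellman equations followed by back-substitution from $i=k-1$ down to $i=d$ to reach the single scalar identity that the claimed $\rho_p$ satisfies. The bookkeeping of the products $\prod_{j=d}^{i}(1-a_j)$ that you flag as the delicate point is exactly what the paper carries out, so no further changes are needed.
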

\begin{proof}
This time, the DP state, $\mathbf{z}$, is a probability vector on $\mathscr{S} = \{0,1,2,\ldots,k\}$, with elements indexed as $z_s,$ $s\in \mathscr{S}$. As in the proof of Theorem~\ref{BSC}, we have $z_s \in \{0,1\}\, \forall s\in \mathscr{S}$, and exactly one $z_s$ can be equal to $1$, and so again, we write the DP state as $z=i$ when $z_i=1$, for $0\leq i\leq k$.
    
    The disturbance, $w$, is equal to $x$, and the action, $u$, is the stochastic matrix
    \[u\text{ }=	
\begin{blockarray}{ccc}
& 0 & 1 \\
\begin{block}{c[cc]}
  0 & 1 & 0 \\
  1 & 1 & 0 \\
  \vdots & \vdots & \vdots \\
  d-1 & 1 & 0 \\
  d & 1-a_d & a_d \\
  \vdots & \vdots & \vdots \\
  k-1 & 1-a_{k-1} & a_{k-1}\\
  k & 0 & 1\\
\end{block}
\end{blockarray}
 \]
	where $a_d,\ldots,a_{k-1}\in [0,1]$. The next DP state is given by $F(z,u,x)=\psi(z,x)$, where $\psi(z,x)=z^\prime$, if the edge $(z,z^\prime)$ labelled by $x$ exists in the presentation. The reward function then is:
	\begin{equation*}
	g(z,u)=h_b(p\delta + \bar{p}\bar{\delta})-h_b(p),
	\end{equation*}
	where $\delta := \sum\limits_{i=0}^{d-1}z_i + \sum\limits_{i=d}^{k-1}z_i(1-a_i)$. The Bellman equation in Theorem~\ref{thm:Bellman} simplifies to
	\begin{equation}
	\label{eq5}
	\rho_p + h_p(i) = h_p(i+1)
	\end{equation}
	for $0\leq i\leq d-1$, 
	\begin{multline}\label{eq:Bellman1}
	\rho_p + h_p(i)=\max_{a_i} [h_b(p\bar{a}_i+\bar{p}a_i)+(1-a_i)h_p(i+1)\\+a_ih_p(0)],
	\end{multline}
	for $d\leq i\leq k-1$, and 
	\begin{align}\label{eq:Bellman2}
	\rho_p =h_p(0)-h_p(k),
	\end{align}
	for $\rho_p\in \mathbb{R}$, and $h_p:\mathscr{Z}\rightarrow \mathbb{R}$. 
	
	Adding together the set of $d$ equations in \eqref{eq5} yields
	\begin{equation}\label{BSC:eq3}
	    d\rho_p = h_p(d)-h_p(0).
	\end{equation}
	Now, since we have that
	\begin{multline*}
	    \rho_p + h_p(k-1) = \max_{a_{k-1}}[h_b(p\bar{a}_{k-1}+\bar{p}a_{k-1})-h_b(p)\\+(1-a_{k-1})h_p(k)+a_{k-1}h_p(0)],
	\end{multline*}
	we substitute $h_p(k)$ as $h_p(0)-\rho_p$ from \eqref{eq:Bellman2}, giving
	\begin{multline}\label{BSC:inter2}
	    h_p(k-1)-h_p(0)=\max_{a_{k-1}}[h_b(p\bar{a}_{k-1}+\bar{p}a_{k-1})-h_b(p)\\+(a_{k-1}-2)\rho_p].
	\end{multline}
	We now substitute $h_p(k-1)$ from \eqref{BSC:inter2} in the penultimate equation of \eqref{eq:Bellman1}, to get $h_p(k-2)-h_p(0)$ in terms of $\rho_p$ and $p$ alone. Proceeding similarly, we arrive at:
	\begin{multline}\label{BSC:inter}
	    h_p(d)-h_p(0)=\max_{a_d,\ldots,a_{k-1}}[h_b(p\bar{a}_d+a_d\bar{p})-h_b(p)+\\\sum\limits_{i=d+1}^{k-1}(h_b(a_ip+\bar{a_i}\bar{p})-h_b(p))\prod\limits_{j=d}^{i-1}(1-a_j)\\-\rho_p(1+\sum\limits_{i=d}^{k-1}\prod\limits_{j=d}^{i}(1-a_j))].
	\end{multline}
	Using \eqref{BSC:eq3}, it is clear that the choice
	\begin{equation*}
	\rho_p = \max_{a_d,\ldots,a_{k-1}}\frac{\sum\limits_{i=d}^{k-1}(h_b(a_ip+\bar{a_i}\bar{p})-h_b(p))\prod\limits_{j=d}^{i-1}(1-a_j)}{d+1+\sum\limits_{i=d}^{k-1}\prod\limits_{j=d}^{i}(1-a_j)},
    \end{equation*}
    satisfies \eqref{BSC:inter}, where the empty product is taken to be $1$.
\end{proof}

We note here that the lower bound in the theorem is exactly equal to that presented in Lemma 5 of \cite{ZW88}, which evaluates a lower bound on the maximum mutual information rate among stationary Markovian input distributions on the graph in Fig. 3. Fig.~\ref{fig5} shows plots of the DP lower bound for the $(0,k)$-RIC BSC($p$), for $k=1,2,3$, alongside the capacity of the unconstrained ($k\rightarrow \infty$) BSC($p$).

\begin{figure*}
  \center
  \includegraphics[width=0.5\textwidth]{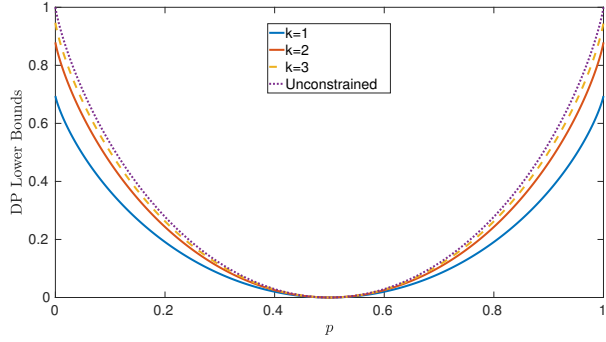}
   \caption{DP lower bounds for the $(0,1)$, $(0,2)$ and $(0,3)$-RIC BSC($p$).}
   \label{fig5}
\end{figure*}

We now move on to the input-constrained BEC($\epsilon$).
\begin{theorem}\label{BEC}
	The capacity of the $(d,\infty)$-RIC binary erasure channel with erasure probability $\epsilon$ satisfies
	\begin{equation*}
		C_{d,\infty}^{\text{BEC}(\epsilon)}\geq C_{d,\infty}\cdot \bar{\epsilon},
	\end{equation*}
	where $C_{d,\infty}=\underset{a\in [0,1]}{\max} \frac{h_b(a)}{ad+1}$ is the (noiseless) capacity of the $(d,\infty)$-RLL constraint.
\end{theorem}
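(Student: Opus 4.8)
The plan is to mirror the proof of Theorem~\ref{BSC}, invoking the DP formulation of Theorem~\ref{Thm4} and solving the associated Bellman equation of Theorem~\ref{thm:Bellman} in closed form. As in that proof, since the channel is input-driven the DP state $\mathbf{z} = \beta_{t-1}$ is a point mass on $\mathscr{S} = \{0,1,\ldots,d\}$, so we may write $z = i$ when $z_i = 1$; the action $u$ is the same stochastic matrix as in Theorem~\ref{BSC}, forcing the input $0$ in states $0,\ldots,d-1$ and allowing the input $1$ with probability $a$ in state $d$; the disturbance is $w = x$; and the transition $F(z,u,x) = \psi(z,x)$ is read off from Fig.~\ref{fig2}, with $\psi(i,0) = i+1$ for $i < d$, $\psi(d,0) = d$, and $\psi(d,1) = 0$.

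The first step is to compute the reward $g(z,u) = I(X_t,S_{t-1};Y_t\mid x^{t-1}) = H(Y_t\mid x^{t-1}) - H(Y_t \mid X_t,S_{t-1},x^{t-1})$. The key observation is that for the BEC the output given the input is independent of the state, so $H(Y_t \mid X_t,S_{t-1},x^{t-1}) = h_b(\epsilon)$. In states $i < d$ the input is deterministically $0$, so the output is $0$ or an erasure and $H(Y_t\mid x^{t-1}) = h_b(\epsilon)$, whence $g(i,u) = 0$. In state $d$, conditioning on the erasure indicator gives $H(Y_t\mid x^{t-1}) = h_b(\epsilon) + \bar\epsilon\, h_b(a)$, so $g(d,u) = \bar\epsilon\, h_b(a)$. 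Thus $\bar\epsilon$ factors cleanly out of the only nontrivial reward, which is the structural reason the final bound will be $\bar\epsilon$ times the noiseless $(d,\infty)$-RLL capacity.

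The second step is to solve the Bellman equation. The $d$ equations coming from states $0,\ldots,d-1$ read $\rho + h(i) = h(i+1)$, which telescope to $d\rho = h(d) - h(0)$, exactly as in Theorem~\ref{BSC}. The state-$d$ equation is $\rho + h(d) = \max_{a\in[0,1]}[\bar\epsilon\, h_b(a) + (1-a)h(d) + a\,h(0)]$; substituting $h(d) - h(0) = d\rho$ reduces it to $\max_{a\in[0,1]}[\bar\epsilon\, h_b(a) - (ad+1)\rho] = 0$, whose solution is $\rho = \bar\epsilon \max_{a\in[0,1]} \frac{h_b(a)}{ad+1} = \bar\epsilon\, C_{d,\infty}$. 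Choosing a bounded $h$ satisfying these equations (e.g.\ $h(0)=0$, $h(i) = i\rho$) and appealing to Theorem~\ref{thm:Bellman} gives $\rho^{*} = \bar\epsilon\, C_{d,\infty}$; Theorems~\ref{Thm3}--\ref{Thm4} then yield $C_{d,\infty}^{\text{BEC}(\epsilon)} \ge \rho^{*} = \bar\epsilon\, C_{d,\infty}$.

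I do not anticipate a genuine obstacle: once the reward is in hand, the argument is a direct translation of the BSC proof with the state-independent constant $h_b(\epsilon)$ in place of $h_b(p)$. The one point requiring care is the reward in state $d$ — specifically, verifying that the erasure event does not interact with the channel state, so that the decomposition $H(Y_t\mid x^{t-1}) = h_b(\epsilon) + \bar\epsilon\, h_b(a)$ is exact; this is precisely what lets $\bar\epsilon$ pull out of the entire DP and is the crux of the theorem.
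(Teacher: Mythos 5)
Your proposal is correct and follows essentially the same route as the paper: the identical DP state/action/transition setup inherited from Theorem~\ref{BSC}, the reward $\bar{\epsilon}\,h_b(a z_d)$ (which you derive by conditioning on the erasure indicator rather than via the ternary-entropy identity $H_{ter}(c\bar{d},d)=h_b(d)+\bar{d}h_b(c)$, but the computation is the same), the telescoping $d\rho = h(d)-h(0)$, and the reduction to $\max_a[\bar{\epsilon}h_b(a)-(ad+1)\rho]=0$. Your explicit choice $h(i)=i\rho$ is a small welcome detail the paper leaves implicit.
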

\begin{proof}
    Just as in the proof of Theorem \ref{BSC}, the DP state, $\mathbf{z}$, is a probability vector on $\mathscr{S} = \{0,1,\ldots,d\}$, with elements indexed as $z_s\in \{0,1\}$, $s\in \mathscr{S}$, and we write the DP state, $z=i$, when $z_i=1$, for $0\leq i\leq d$.
    
    The disturbance, $w$, is equal to $x$, and the action, $u$, is the same stochastic matrix as in the proof of Theorem \ref{BSC}. Also, the next DP state is decided by the deterministic function, $\psi(\cdot)$, as defined in the proof of Theorem \ref{BSC}.
    
    Now, the reward function can be computed to be:
    \begin{align*}
	g(z,u)&=H(Y|z,u)-H(Y|X)\\
	&= H_{ter}(az_d\bar{\epsilon},\epsilon)-h_b(\epsilon)\\
	&\stackrel{(a)}{=} \bar{\epsilon}h_b(az_d), 
	\end{align*}
	where (a) follows from the fact that $H_{ter}(c\bar{d},d)=h_b(d)+\bar{d}h_b(c)$. Solving the Bellman equation entails identifying a scalar $\rho_\epsilon$ such that for a function $h_\epsilon:\mathscr{Z}\rightarrow \mathbb{R}$,
	\begin{multline}\label{BEC1}
	\rho_\epsilon + h_\epsilon(z)=\max_{a\in [0,1]} [g(z,a)+(1-az_d)h_\epsilon(\psi(z,0))\\+az_dh_\epsilon(\psi(z,1))
	],
	\end{multline}
	for each $z\in \mathscr{Z}$. The set of $d+1$ equations in \eqref{BEC1} can be split as:
	\begin{equation*}
	    \rho_\epsilon + h_\epsilon(i) = h_\epsilon(i+1),
	\end{equation*}
	for $0\leq i\leq d-1$, since $z_d=0$, and
	\begin{equation}\label{BEC3}
	    \rho_\epsilon = \max_a \left[\bar{\epsilon}h_b(a)+a(h_\epsilon(0)-h_\epsilon(d))\right].
	\end{equation}
	From the first set of $d$ equations, we arrive at $d\rho_\epsilon=h_\epsilon(d)-h_\epsilon(0)$. Substituting this in equation \eqref{BEC3}, we get:
	\begin{equation}\label{BEC4}
	    \max_{a\in [0,1]}[\bar{\epsilon}h_b(a)-\rho_\epsilon(ad+1)]=0.
	\end{equation}
	The choice
	\begin{equation*}
	    \rho_\epsilon = \bar{\epsilon}\max_{a}\left\{\frac{h_b(a)}{ad+1}\right\}
	\end{equation*}
	satisfies \eqref{BEC4}. We note that at $\epsilon=0$, $Y^N$ is equal to $X^N$. Hence, the coefficient of $\bar{\epsilon}$ is equal to the noiseless capacity of the $(d,\infty)$-RIC, by the remark following Theorem \ref{Thm3}.
\end{proof}
For $d=1$, a comparison between the DP lower bound and the ``memory $1$'' dual capacity upper bound of Thangaraj \cite{Thangaraj} are shown in figure \ref{fig6}, along with a plot of the feedback capacity \cite{NK}. Numerical evaluations indicate that the DP lower bound also closely approximates the asymptotic ($\epsilon\rightarrow 0$) lower bound of Li et al. \cite{Han2}. We now provide a lower bound on the capacity of the $(d,k)$-RIC BEC($\epsilon$). 

\begin{figure*} 
  \centering
  \includegraphics[width=0.5\textwidth]{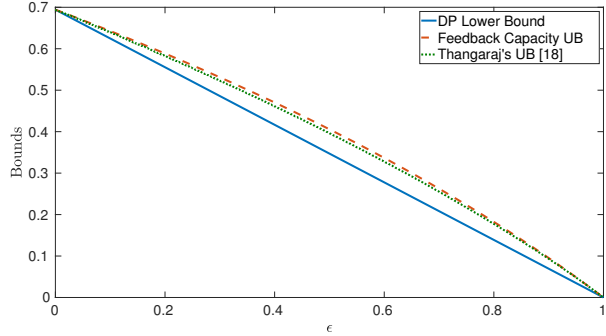}
   \caption{Comparison of the DP lower bound for the $(1,\infty)$-RIC BEC($\epsilon$) with bounds in \cite{NK} and \cite{Thangaraj}.}
   \label{fig6}
\end{figure*}
\begin{theorem}\label{BEC2}
	The capacity of the $(d,k)$-RIC $(k<\infty)$ binary erasure channel with erasure probability $\epsilon$ satisfies
	\begin{equation*}
		C_{d,k}^{\text{BEC}(\epsilon)}\geq C_{d,k} \cdot \bar{\epsilon},
	\end{equation*}
	where $C_{d,k}=\underset{a_d,\ldots,a_{k-1}}{\max} \frac{\sum\limits_{i=d}^{k-1}h_b(a_i)\prod\limits_{j=d}^{i-1}(1-a_j)}{d+1+\sum\limits_{i=d}^{k-1}\prod\limits_{j=d}^{i}(1-a_j)}$ is the (noiseless) capacity of the $(d,k)$-RLL constraint.
\end{theorem}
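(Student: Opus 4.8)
The plan is to mimic the proof of Theorem~\ref{BSC2} essentially verbatim, exploiting the fact that the state-transition structure of Fig.~\ref{fig3} --- and hence the shape of the Bellman recursion --- is identical for the $(d,k)$-RIC BEC and BSC; only the per-stage reward changes, and it changes in exactly the way already seen in the proof of Theorem~\ref{BEC}. First I would reuse the DP set-up of Theorem~\ref{BSC2} unchanged: the DP state $\mathbf{z}$ is a standard basis vector on $\mathscr{S}=\{0,1,\ldots,k\}$, abbreviated $z=i$ when $z_i=1$; the action $u$ is the same stochastic matrix, parametrised by $a_d,\ldots,a_{k-1}\in[0,1]$ (states $i<d$ emit $0$, state $i$ with $d\le i\le k-1$ emits $1$ with probability $a_i$, and state $k$ emits $1$); and the next state is $F(z,u,x)=\psi(z,x)$ read off Fig.~\ref{fig3}.

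Next I would recompute the reward. With $\delta:=\sum_{i=0}^{d-1}z_i+\sum_{i=d}^{k-1}z_i(1-a_i)$ denoting the probability that the current input is $0$, the output distribution given $(z,u)$ is $(\delta\bar\epsilon,\bar\delta\bar\epsilon,\epsilon)$ over $\{0,1,?\}$, so, applying $H_{ter}(c\bar d,d)=h_b(d)+\bar d\,h_b(c)$ exactly as in the proof of Theorem~\ref{BEC},
\[
g(z,u)=H(Y\mid z,u)-H(Y\mid X)=H_{ter}(\delta\bar\epsilon,\epsilon)-h_b(\epsilon)=\bar\epsilon\,h_b(\delta).
\]
This is the reward of Theorem~\ref{BSC2} under the replacement $h_b(p\delta+\bar p\bar\delta)-h_b(p)\mapsto\bar\epsilon\,h_b(\delta)$: the reward is $0$ in states $i<d$ and equals $\bar\epsilon\,h_b(a_i)$ in state $i$ for $d\le i\le k-1$.

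I would then write out the Bellman equation of Theorem~\ref{thm:Bellman}. By the same case split as \eqref{eq5}--\eqref{eq:Bellman2}, it becomes $\rho_\epsilon+h_\epsilon(i)=h_\epsilon(i+1)$ for $0\le i\le d-1$; $\rho_\epsilon+h_\epsilon(i)=\max_{a_i}\bigl[\bar\epsilon\,h_b(a_i)+(1-a_i)h_\epsilon(i+1)+a_i h_\epsilon(0)\bigr]$ for $d\le i\le k-1$; and $\rho_\epsilon=h_\epsilon(0)-h_\epsilon(k)$. Summing the first $d$ equations gives $d\rho_\epsilon=h_\epsilon(d)-h_\epsilon(0)$; substituting $h_\epsilon(k)=h_\epsilon(0)-\rho_\epsilon$ into the $i=k-1$ equation and unwinding the recursion downward through $i=k-2,\ldots,d$ --- the backward substitution of \eqref{BSC:inter2}--\eqref{BSC:inter} --- reduces everything to
\[
\max_{a_d,\ldots,a_{k-1}}\Bigl[\bar\epsilon\sum_{i=d}^{k-1}h_b(a_i)\prod_{j=d}^{i-1}(1-a_j)-\rho_\epsilon\Bigl(d+1+\sum_{i=d}^{k-1}\prod_{j=d}^{i}(1-a_j)\Bigr)\Bigr]=0,
\]
so that $\rho_\epsilon=\bar\epsilon\max_{a_d,\ldots,a_{k-1}}\frac{\sum_{i=d}^{k-1}h_b(a_i)\prod_{j=d}^{i-1}(1-a_j)}{d+1+\sum_{i=d}^{k-1}\prod_{j=d}^{i}(1-a_j)}=C_{d,k}\bar\epsilon$ solves it, with the empty product taken to be $1$. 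Finally, since $Y^N=X^N$ when $\epsilon=0$, Remark~\ref{rem1} identifies the coefficient of $\bar\epsilon$ as the noiseless $(d,k)$-RLL capacity $C_{d,k}$.

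Because every ingredient already appears in the proofs of Theorems~\ref{BSC2} and \ref{BEC}, I do not expect a genuine obstacle. The only points needing care are (i) invoking the ternary-entropy identity with the correct arguments so that the $h_b(\epsilon)$ terms cancel cleanly in the reward, and (ii) double-checking the bookkeeping of the backward recursion --- in particular that the ``$d+1$'' in the denominator arises as the leading $1$ plus the $d$ contributed by $d\rho_\epsilon=h_\epsilon(d)-h_\epsilon(0)$, and that the empty-product convention is applied consistently --- so that the resulting closed form matches the claimed $C_{d,k}$.
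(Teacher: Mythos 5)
Your proposal is correct and follows essentially the same route as the paper's own proof: the same DP state/action/disturbance as in Theorem~\ref{BSC2}, the same reward computation via the ternary-entropy identity (the paper writes the reward as $\bar\epsilon\,h_b(z_k+\sum_{i=d}^{k-1}z_ia_i)=\bar\epsilon\,h_b(\bar\delta)$, which equals your $\bar\epsilon\,h_b(\delta)$ by symmetry of $h_b$), the same case split of the Bellman equation, the same backward substitution starting from $\rho_\epsilon=h_\epsilon(0)-h_\epsilon(k)$, and the same appeal to Remark~\ref{rem1} to identify the coefficient of $\bar\epsilon$ as $C_{d,k}$. No gaps.
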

\begin{proof}
    The DP state, $\mathbf{z}$, is a probability vector on $\mathscr{S} = \{0,1,\ldots, k\}$, with elements indexed as $z_s\in \{0,1\}$, $s\in \mathscr{S}$, and we write the DP state, $z=i$, when $z_i=1$, for $0\leq i\leq k$.
    
    The disturbance, $w$, is equal to $x$, and the action, $u$, is the same stochastic matrix as in the proof of Theorem \ref{BSC2}. The next DP state is dictated by the deterministic function, $\psi(\cdot)$, defined in the proof of Theorem \ref{BSC}. The reward function is computed to be:
    \begin{align*}
	g(z,u)&=H(Y|z,u)-H(Y|X)\\
	&= H_{ter}\left(\bar{\epsilon}\left(z_k+\sum\limits_{i=d}^{k-1}z_ia_i\right),\epsilon\right)-h_b(\epsilon)\\
	&= \bar{\epsilon}h_b\left(z_k+\sum\limits_{i=d}^{k-1}z_ia_i\right), 
	\end{align*}
	where $H_{ter}(\cdot,\cdot)$ denotes the ternary entropy function, and the last equality follows from the property of the ternary entropy function. The Bellman equations can then be split as:
	\begin{align}\label{BEC5}
	\rho_\epsilon + h_\epsilon(i)&=\max_{a_d,\ldots,a_{k-1}} [g(z,u)+(1-z_k-\sum\limits_{i=d}^{k-1}a_iz_i)h_\epsilon(\psi(z,0))]\\
	&=h_\epsilon(i+1),
	\end{align}
	for $0\leq i\leq d-1$, and
	\begin{align}\label{BEC6}
	\rho_\epsilon + h_\epsilon(i)=\max_{a_i} [\bar{\epsilon}h_b(a_i)+(1-a_i)h_\epsilon(i+1)+a_ih_\epsilon(0)],
	\end{align}
	for $d\leq i\leq k-1$, and 
	\begin{align}\label{BEC7}
	\rho_\epsilon =h_\epsilon(0)-h_\epsilon(k),
	\end{align}
	for $\rho_\epsilon\in \mathbb{R}$, and $h_\epsilon:\mathscr{Z}\rightarrow \mathbb{R}$. Again, the first set of $d$ equations gives us
	\begin{equation}\label{BEC8}
	    d\rho_\epsilon = h_\epsilon(d)-h_\epsilon(0).
	\end{equation}
	Now, since we have that
	\begin{multline*}
	    \rho_\epsilon + h_\epsilon(k-1) = \max_{a_{k-1}}[\bar{\epsilon}h_b(a_{k-1})+(1-a_{k-1})h_\epsilon(k)+\\a_{k-1}h_\epsilon(0)],
	\end{multline*}
	we substitute \eqref{BEC8}, giving
	\begin{multline*}
	    \rho_\epsilon + h_\epsilon(k-1)=\max_{a_{k-1}}[\bar{\epsilon}h_b(a_{k-1})+h_\epsilon(0)-(1-a_{k-1})\rho_\epsilon],
	\end{multline*}
	and hence,
	\begin{multline*}
	    h_\epsilon(k-1)-h_\epsilon(0)=\max_{a_{k-1}}[\bar{\epsilon}h_b(a_{k-1})+(a_{k-1}-2)\rho_\epsilon].
	\end{multline*}
	Proceeding similarly, we arrive at
	\begin{multline}
	    h_\epsilon(d)-h_\epsilon(0)=\max_{a_d,\ldots,a_{k-1}}[\bar{\epsilon}h_b(a_d)+\bar{\epsilon}\sum\limits_{i=d+1}^{k-1}h_b(a_{i})\prod\limits_{j=d}^{i-1}(1-a_j)\\-\rho_\epsilon(1+\sum\limits_{i=d}^{k-1}\prod\limits_{j=d}^{i}(1-a_j))].
	\end{multline}
	Now, using \eqref{BEC8}, we see that $\rho_\epsilon$ obeys
	\begin{equation*}
	\rho_\epsilon = \bar{\epsilon}\cdot \max_{a_d,\ldots,a_{k-1}}\left(\frac{\sum\limits_{i=d}^{k-1}h_b(a_i)\prod\limits_{j=d}^{i-1}(1-a_j)}{d+1+\sum\limits_{i=d}^{k-1}\prod\limits_{j=d}^{i}(1-a_j)}\right),
    \end{equation*}
    where the empty product is taken to be $1$. Again, the coefficient of $\bar{\epsilon}$ is equal to the noiseless capacity of the $(d,k)$-RIC, by the remark following Theorem \ref{Thm3}.
\end{proof}
\begin{remark}
    In the theorems above, the fact that the coefficients of $\bar{\epsilon}$ are indeed equal to the noiseless capacities also follows from the observation that the coefficients are the entropies of the maxentropic Markov chains on the graphs in Figs.~\ref{fig2} and \ref{fig3}, and hence, by Theorem 3.23 of \cite{Roth}, are equal to the noiseless capacities.
\end{remark}

\section{Proofs}\label{sec:proofs}

\begin{proof}[Proof of Theorem \ref{Thm3}]
By way of \eqref{LB}, we have 
	\begin{align}
	I(X^N;Y^N \mid s_0) 
	& \ge I(Y^N \to X^N \mid s_0) \notag \\
	&\geq \sum_{t=1}^N I(X_t;Y_t \mid X^{t-1},s_0) \label{ineq:thm3.1_proof}\\
	&= \sum_{t=1}^N I(X_t,S_{t-1};Y_t \mid X^{t-1},s_0),  \notag
	\end{align}	
	the last equality following from the state evolution of input-driven FSCs. Hence, via Theorem~\ref{thm:FSC_cap}, we have 
	\begin{align*}
	C &= \lim_{N\rightarrow \infty} \max_{\{P(x_t|x^{t-1},s_0)\}_{t=1}^{N}} \frac{1}{N} I(X^N;Y^N \mid s_0) \\
	&\ge \lim_{N\rightarrow \infty} \max_{\{P(x_t|x^{t-1},s_0)\}_{t=1}^{N}} \frac{1}{N} \sum_{t=1}^N I(X_t,S_{t-1};Y_t \mid X^{t-1},s_0)\\
	&= \sup_{\{P(x_t|x^{t-1})\}_{t\geq 1}} \liminf_{N\rightarrow \infty} \frac{1}{N} \sum_{t=1}^N I(X_t,S_{t-1};Y_t \mid X^{t-1}),
	\end{align*}
	the last equality above following by the arguments in Lemma 4 of \cite{Weiss}, the conditioning on $s_0$ being suppressed in the notation. Finally, we can replace the supremum over $\{P(x_t|x^{t-1})\}_{t \ge 1}$ by a supremum over input distributions of the form $\{P(x_t|s_{t-1})\}_{t \ge 1}$, possibly at the expense of another inequality.
\end{proof}

\begin{proof}[Proof of Theorem~\ref{Thm5}]
For any fixed $s_0$, we have
	\begin{align}
	C_N &:= \max_{\{P(x_t|x^{t-1},s_0)\}_{t=1}^{N}} \frac{1}{N}I(X^{N};Y^{N} \mid s_0) \notag \\
	&\stackrel{(a)}{\geq} \max_{\{P(x_t|x^{t-1},s_0)\}_{t=1}^{N}} \frac{1}{N}\sum_{t=1}^{N} I(X_t;Y_t \mid X^{t-1},s_0) \notag \\
	&\stackrel{(b)}{=} \max_{\{P(x_t|x^{t-1},s_0)\}_{t=1}^{N}} \frac{1}{N} \sum_{t=1}^{N} I(X_t;Y_t \mid S_{t-1},X^{t-1},s_0), \label{ineq1:thm3.3_proof}
	\end{align}
where (a) is by \eqref{ineq:thm3.1_proof}, and (b) is by the fact that for an input-driven channel, $S_{t-1}$ is determined by $X^{t-1}$ and $s_0$.

Now, let $\{Q(x|s,v)\} \in \mathscr{Q}$ be an aperiodic input distribution, so that $\mathscr{G}(\{Q(x|s,v)\})$ has a single closed communicating class, ${\mathscr{G}}_0$, that is also aperiodic. By the connectedness of the FSC, arguing as in Lemma~1 of \cite{Single}, there is some vertex $v_0$ of the $\mathscr{V}$-graph such that $(s_0,v_0)$ is in ${\mathscr{G}}_0$. 

We will continue the chain of inequalities from \eqref{ineq1:thm3.3_proof} by specifying an input distribution $\{P(x_t|x^{t-1},s_0)\}_{t=1}^N$ in terms of $\{Q(x|s,v)\}$. For all $t \ge 1$, set
$$
P(x_t \mid x^{t-1},s_0) = Q(x_t \mid s_{t-1},v_{t-1}),
$$
where $s_{t-1} = f(s_0,x^{t-1})$ is the state at time $t-1$ reached by the FSC starting at $s_0$ and driven by the inputs $x^{t-1}$, and $v_{t-1} = \Phi(v_0,x^{t-1})$ is the vertex of the $\mathscr{V}$-graph at the end of the path labelled by $x^{t-1}$ starting at $v_0$. Note that this choice of input distribution induces the following Markov chain:
	\begin{equation}\label{MC}
	X^{t-1}\textbf{\textemdash}(S_{t-1},V_{t-1})\textbf{\textemdash}(X_t,Y_t),
	\end{equation}
where $S_{t-1} = f(s_0,X^{t-1})$ and $V_{t-1} = \Phi(v_0,X^{t-1})$.

Thus, carrying on from \eqref{ineq1:thm3.3_proof}, with the input distribution $\{P(x_t|x^{t-1},s_0)\}_{t=1}^{N}$ specified as above, we have
	\begin{align} 
	C_N &\ge \frac{1}{N} \sum_{t=1}^N I(X_t;Y_t \mid S_{t-1},X^{t-1},s_0) \notag \\
	&= \frac{1}{N} \sum_{t=1}^N I(X_t;Y_t \mid S_{t-1},V_{t-1}). \label{CN_ineq}
	\end{align}	
The last equality is due to the fact that $I(X_t;Y_t \mid S_{t-1},X^{t-1},s_0) = I(X_t;Y_t \mid S_{t-1},V_{t-1},X^{t-1},s_0)$, since $V_{t-1}$ is determined by $X^{t-1}$ and the (fixed) vertex $v_0$, and the latter mutual information equals $I(X_t;Y_t \mid S_{t-1},V_{t-1})$ by the Markov chain in \eqref{MC}. Finally, taking limits as $N \to \infty$ in \eqref{CN_ineq}, we obtain the desired bound $C \ge I_Q(X;Y|S,V)$, since 
$$
\lim_{N \to \infty} \frac{1}{N} \sum_{t=1}^N I(X_t;Y_t \mid S_{t-1},V_{t-1}) = I_Q(X;Y \mid S,V)
$$
by the ergodicity of the Markov chain on $\mathscr{S} \times \mathscr{V}$ induced by the aperiodic input distribution $\{Q(x|s,v)\}$.
\end{proof}

\section{Conclusions and Future Work} \label{sec:conc}
In this work, novel lower bounds on the capacities of input-driven FSCs were derived. The main idea was the lower bounding of the mutual information rate by the reverse directed information rate. A DP formulation of the lower bound was given, which was then applied to input-constrained memoryless channels, resulting in simple analytical expressions that are valid for all values of the channel parameters, thereby extending known asymptotic results. Furthermore, an alternative single-letter lower bound on the capacity was derived, using the concept of directed $\mathscr{V}$-graphs.

This paper focuses on lower bounding the reverse directed information rate only. The lower bounds on capacity that this approach yields can be improved by estimating or bounding the forward directed information rate from the input distribution that optimizes the reverse rate.

\section*{Acknowledgment}
This work was supported in part by a MATRICS grant (no.\ MTR/2017/000368) administered by the Science and Engineering Research Board (SERB), Govt.\ of India.

\bibliographystyle{IEEEtran}
{\footnotesize
\bibliography{references}}

\end{document}